\def\ps@headings{%
\def\@oddhead{\mbox{}\scriptsize\rightmark \hfil \thepage}%
\def\@evenhead{\scriptsize\thepage \hfil \leftmark\mbox{}}%
\def\@oddfoot{}%
\def\@evenfoot{}}
\newcommand {\dd}{\mathrm{d}}
\newcommand {\bx}{\mathbf{x}}
\newcommand {\by}{\mathbf{y}}
\newcommand {\bzero}{\mathbf{0}}
\newcommand {\bB}{\mathbf{B}}
\newcommand {\bBs}{\mathbf{B}^{*}}
\newcommand {\bBsb}{\widetilde{\mathbf{B}}^{*}}
\newcommand {\bBsc}{\widehat{\mathbf{B}}^{*}}
\newcommand {\bc}{\mathbf{c}}
\newcommand {\bcs}{\mathbf{c}^{*}}
\newcommand {\bcsb}{\widetilde{\mathbf{c}}^{*}}
\newcommand {\bcsc}{\widehat{\mathbf{c}}^{*}}
\newcommand {\As}{A^{*}}
\newcommand {\Asb}{\widetilde{A}^{*}}
\newcommand {\Asc}{\widehat{A}^{*}}
\newcommand {\Ass}{A^{**}}
\newcommand {\etas}{\eta^{*}}
\newcommand {\etasb}{\widetilde{\eta}^{*}}
\newcommand {\etasc}{\widehat{\eta}^{*}}
\newcommand {\etass}{\eta^{**}}
\newcommand {\bAs}{\mathbf{A}^{*}}
\newcommand {\bAss}{\mathbf{A}^{**}}
\newcommand {\bAsb}{\widetilde{\mathbf{A}}^{*}}
\newcommand {\bAsd}{\mathbf{A}'^{*}}
\newcommand {\Asd}{A'^{*}}
\newcommand {\bA}{\mathbf{A}}
\newcommand {\bfeta}{\boldsymbol{\eta}}
\newcommand {\betas}{\boldsymbol{\eta}^{*}}
\newcommand {\betass}{\boldsymbol{\eta}^{**}}
\newcommand {\betasb}{\widetilde{\boldsymbol{\eta}}^{*}}
\newcommand {\betasc}{\widehat{\boldsymbol{\eta}}^{*}}
\newcommand {\Mb}{\widetilde{M}}
\newtheorem{theorem}{Theorem}
\newtheorem{lemma}{Lemma}
\begin{document}
\title{\fontsize{23pt}{25pt}\selectfont{Structured Spectrum Allocation and User Association in Heterogeneous Cellular Networks}}
\author{Wei Bao and Ben Liang  \\
Department of Electrical and Computer Engineering, University of Toronto,  Canada\\
              Email: \{wbao, liang\}@comm.utoronto.ca
}

\maketitle
\thispagestyle{empty}

\begin{abstract}
We study  joint spectrum allocation  and   user association  in
heterogeneous cellular networks with multiple tiers  of base stations. A stochastic geometric approach is applied as the basis to derive  the average downlink user data rate  in a closed-form expression. Then, the expression is employed as the objective function in jointly optimizing spectrum allocation and   user association, which is of non-convex programming in nature. A  computationally efficient Structured Spectrum Allocation and User Association (SSAUA) approach is proposed, solving the optimization problem optimally when the density of users is low, and near-optimally with a guaranteed performance bound when the  density of users is high. A Surcharge Pricing Scheme (SPS) is also presented,  such that the designed association bias values can be achieved in Nash equilibrium. Simulations and numerical studies are conducted to validate the accuracy and efficiency of the proposed SSAUA approach and SPS.
\end{abstract}
\section{Introduction}\label{section_intro}

Traditional single-tiered macro-cellular networks provide wide coverage for mobile user equipments (UEs), but they are insufficient to satisfy the exploding  demand driven by modern mobile traffic, such as multimedia transmissions and cloud computing tasks.  One efficient means to alleviate this problem is to install a diverse set of small-cells (e.g., picocells and femtocells), overlaying the macrocells, to form a multi-tiered  heterogeneous cellular network \cite{Survey}. Each small-cell is equipped with a shorter-range and lower-cost base station (BS), to provide  nearby UEs with higher-quality communication links with lower power usage.

However, in the presence of multiple tiers of BSs in a cellular network, user association control becomes more challenging. A most direct approach is association by maximum received power, in which UEs are associated with the BS (in any tier) with the largest received power. However, in this case, because small-cell BSs are transmitting at lower power levels, only the UEs very close to them will connect with them, while most other UEs are still crowding in macrocells, leading to degraded performance. An example is shown in Fig.~\ref{fig_intro1}, in which many UEs are occupying the macrocells, while some small-cells are nearly empty.

In order to resolve this issue, a \textit{flexible user association} approach may be employed \cite{JF1,JF3}, in which
 each tier of BSs is assigned  a \textit{user association bias value}, and a UE is associated with a BS with the largest received power multiplied by the bias value. If small-cell BSs are assigned with  larger association bias values, the small-cells are ``expanded'' accordingly.
This can result in a more balanced mobile traffic pattern and thus better network performance. Fig.~\ref{fig_intro2} shows an example of flexible user association.
 However, if the association bias values for small-cell BSs are too large, it will cause improper  expansions of small-cells such that UEs at their cell-edge may suffer from inadequate received power. As a consequence, the association bias values should be properly designed so that the overall network performance is optimized.


Further complicating the resource management problem in a multi-tier cellular network, the radio spectrum licensed by the network operator needs to be shared by BSs of widely different power and coverage areas. How to optimally allocate spectrum among different tiers is an important open problem. In order to avoid cross-tier interference, and the prohibitive complexity in tracking and provisioning for such interference especially with unplanned deployment of small cells, a disjoint spectrum mode is commonly advocated \cite{SG_ICASSP, SpectrumAllocation, TQ1}, where different tiers of BSs are allocated non-overlapping portions of the spectrum. Even so, it is still a challenging problem to properly divide the spectrum for optimal network performance.

\begin{figure} \centering
\addtolength{\subfigcapskip}{-0.3cm}
\subfigure[Association by maximum received power.] { \label{fig_intro1}
\includegraphics[scale=0.45]{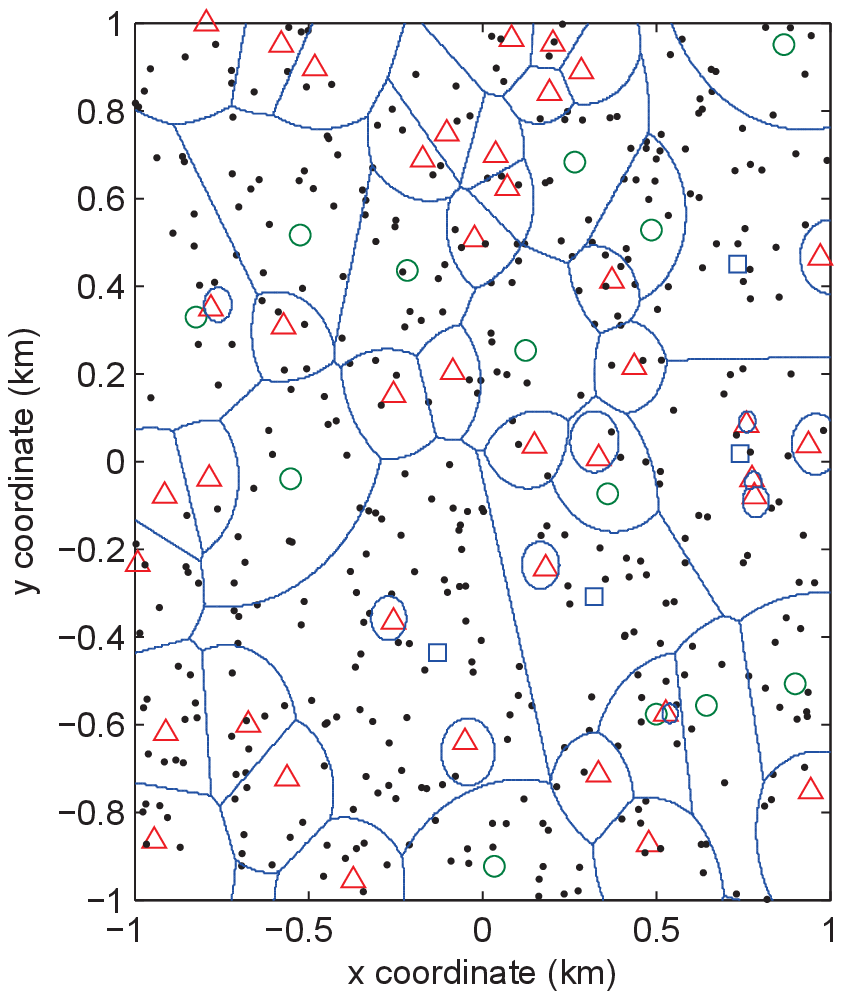}
}
\subfigure[Flexible user association.] { \label{fig_intro2}
\includegraphics[scale=0.45]{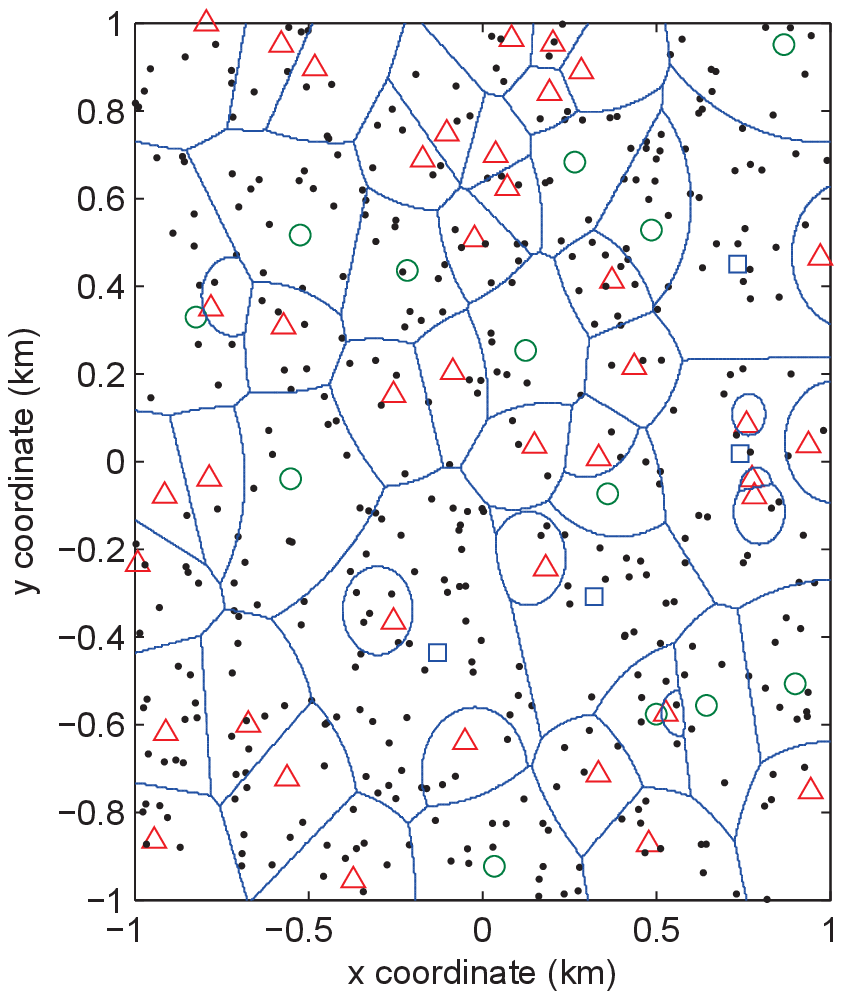}
}
\vspace{-0.3 cm}
\caption{An example of a three-tier cellular network. Macrocell BSs, picocell BSs, and femtocell BSs are represented by squares, circles, and triangles respectively; UEs are represented by dots; blue lines show cell boundaries.}
\label{fig_intro}
\end{figure}

In this work, our objective is to study jointly optimal spectrum allocation and user association in a heterogeneous celluar  network with  multiple tiers of BSs.
First, we develop a stochastic geometric model to study the network performance analytically.  
A closed-form expression for the average downlink UE data rate is derived, which is then employed as the objective function for jointly optimizing the spectrum allocation among tiers and the user association bias values.

This resultant optimization problem is of non-convex programming in nature and cannot be solved with a standard method. Instead, we explore two important structures in solving the problem. Referred to as the \textit{density thresholding} structure, we show that the problem can be studied separately over  sparse UE and dense UE scenarios, divided by a parameter specific threshold. Referred to as the \textit{priority ordering} structure, we show that a tier with higher BS density should have higher priority in spectrum allocation. Based on these observations, we propose a computationally efficient Structured Spectrum Allocation and User Association (SSAUA) approach to solve the problem optimally in the sparse UE scenario, and near-optimally in the dense UE scenario with a quantified performance bound.


Finally, toward practical implementation of SSAUA, we propose a Surcharge Pricing Scheme (SPS), such that the designed association bias values can be achieved in Nash equilibrium. Hence, each UE is incentivized to adopt the proposed design with individual rationality.

The rest of the paper is organized as follows. In section \ref{section_related}, we discuss the relation between our work and prior works.
In Section \ref{section_model}, we present the system model. In Sections \ref{section_performance}, \ref{section_optimization}, and \ref{section_equilibirum}, we present our contributions in UE data rate derivation, SSAUA design, and SPS, respectively. In Section \ref{section_experiment}, we present numerical results. Finally, conclusions are given in Section \ref{section_conclusion}.

\section{Related Works}\label{section_related}

\subsection{Stochastic Geometry as Analytical Basis}
Stochastic geometry  \cite{SG_Totorial1,SG_Totorial2, SG_Totorial3, SG_book1} is a powerful mathematical modeling tool to
analyze the performance (e.g., outage probability and data rate) of cellular networks with random spatial patterns of UEs and BSs. In this work, we focus on the downlink user data rate as performance measure.
Pioneering works on downlink performance analysis using stochastic geometry include \cite{JF4}, \cite{JF5}, and \cite{arXiv_downlink}, for either the single-tier or the multi-tier case. None of them considered spectrum allocation or user association.

\subsection{Spectrum Allocation and User Association}
Most prior studies considered either spectrum allocation or user association separately. For example, \cite{TQ1,TQ2} studied optimal spectrum allocation, in cellular networks limited to two tiers of BSs, without flexible user association.

Assuming a fixed number of UEs and BSs and without considering their random spatial patterns, \cite{noSGShroff, noSGJF, noSGMHong, noSGFB, noSGWeiYu_ICASSP} studied optimal user association with deterministic utility optimization. With a stochastic geometric approach,  \cite{JF1} proposed the flexible user association model with bias values, which is adopted in our work. It also derived the coverage probability and UE data rate, considering cross-tier interference, but in non-closed forms.  \cite{JF1} did not provide any design insights to optimize the derived performance metrics. It was later extended in \cite{JF3} to study optimal user association in a network with two tiers of BSs, without considering spectrum allocation.

Joint spectrum allocation and user association was studied in \cite{JF2}. It was limited to two tiers of BSs. \cite{JF2} presented a qualitative study on the optimal network performance in terms of coverage probability and data rate, without further providing analytical details in solving the optimization problems. A similar problem was also studied in  \cite{Yicheng}, with frequency reuse instead of tiered spectrum division as the approach for spectrum sharing. It could accommodate more than two tiers, but it provided only conditionally optimal user association given frequency reuse factors or conditionally optimal frequency reuse factors given user association bias values. Joint optimization remained an open problem.

Compared with the above studies, we consider multiple tiers of BSs with disjoint spectrum and provide optimal and analytically bounded near-optimal solutions for joint spectrum allocation and user association.

\section{System Model}\label{section_model}

\subsection{Multi-tier Cellular Network}
We consider a heterogeneous cellular network with randomly spatially distributed $K\geq 2$ tiers of BSs.
As in conventional stochastic geometric modeling of multi-tier cellular networks \cite{JF1,JF2,JF3,JF5,Yicheng}, each tier of BSs independently forms a homogeneous Poisson point process (PPP) in two dimensional Euclidean space $\mathbb{R}^2$. Let $\Phi_k$ denote the PPP corresponding to tier-$k$ BSs, with intensity $\lambda_k$. Without loss of generality, we assume that $\lambda_1<\lambda_2\ldots<\lambda_K$. (If $\lambda_i=\lambda_j, i\neq j$ in reality, we may approximate $\lambda_j=\lambda_i+\xi$, where $\xi$ is arbitrarily close to $0$.)  UEs are also modeled as a homogeneous PPP $\Psi$ with intensity $\mu$, independent of all BSs. We assume each BS is connected to the core network by separate high-capacity wired or wireless links that have no influence on our performance analysis. In addition, because we focus on downlink  analysis, we assume that the downlink and uplink of the system are
operated in different spectrum, so that the uplink interference and capacity have no influence on the downlink analysis.


\subsection{Power and Pathloss Model}
We define the tiers of BSs by their transmission power. Let  $P_k$ be the transmission power of  tier-$k$ BSs,  which is a given parameter.
If $P_t(\bx)$, $P_t(\bx)\in \{P_1, P_2,\ldots, P_K\}$, is the transmission power from a BS at $\bx$ and $P_r(\by)$ is the received power at $\by$, we have $P_r(\by)=\frac{P_t(\bx)h_{\bx,\by}}{\alpha|\bx-\by|^{\gamma}}$, where $\alpha|\bx-\by|^{\gamma}$ is the propagation loss function with predetermined constants $\alpha$ and $\gamma$ (where the path loss exponent $\gamma>2$ in practice),  and $h_{\bx,\by}$ is the fast fading term.  Corresponding to common Rayleigh fading with power normalization, $h_{\bx,\by}$ is independently  exponentially distributed with unit mean. Let $H(\cdot)$ be the cumulative distribution function  of $h_{\bx,\by}$.

\subsection{Spectrum Allocation} \label{subsection_allocation}
In order to avoid cross-tier interference, different tiers of BSs are allocated separated spectrum. Assume the total spectrum bandwidth is $W$.  The network operator allocates $\eta_kW$ to each tier-$k$ BS, where $\eta_k$ is the spectrum allocation factor and $\sum_{k=1}^{K}\eta_k=1$. Let $\bfeta=(\eta_1,\eta_2,\ldots, \eta_K)$. Note that  BSs in the same tier are operated on the same spectrum.

We additionally consider the possible constraints $\eta_{\min,k} \leq \eta_k \leq \eta_{\max,k}$, for $k=1, 2, \ldots, K$.  Clearly, we have $\sum_{k=1}^{K} \eta_{\min,k} \leq 1\leq \sum_{k=1}^{K} \eta_{\max,k}$.  Furthermore, we assume that $0<\eta_{\min,1}\leq \eta_{\min,2}\ldots\leq \eta_{\min,K}$ and $0<\eta_{\max,1}\leq\eta_{\max,2}\ldots\leq\eta_{\max,K}$, i.e., the network operator is likely (but not necessarily) to allocate more spectrum to a tier with higher BS density.  Note that this is a general condition that
contains the special case where there is no constraint on $\bfeta$.

Given a specific tier-$k$ BS, it is common to assume that all its associated UEs are equally allocated spectrum \cite{Yicheng,JF1,TQ1}. Hence, the per-UE assigned spectrum bandwidth is $\beta_k=\eta_k W/N_k$, where $N_k$ is a random variable denoting the number of UEs associated with the BS.

\subsection{Coverage Probability and UE Data Rate} \label{subsection_datarate}
Let $T$ denote the minimum required Signal-to-Interference Ratio (SIR) of UEs.  The coverage probability of  a UE is defined as the probability that its SIR  is no less than $T$ \cite{SG_Totorial1}. As in conventional wireless modeling \cite{Yicheng,JF1,TQ1}, if a UE experiences coverage probability $P'$ and is allocated spectrum bandwidth $\beta'$, its data rate is $\beta' \log(1+T)$ if the SIR is no less than $T$, and its data rate is $0$ if the SIR is  less than $T$ (i.e., outage occurs). Thus, the overall data rate of the UE is $\beta' \log(1+T) P'$. Note that,  unless otherwise stated, $\log$ represents $\log_2$. Also, the thermal noise is assumed to be negligible compared with interference.

\subsection{Flexible User Association}
Given that a UE is located at $\by$, it associates itself with the BS that provides the maximum \textit{biased received power}  \cite{JF1,JF3,Yicheng} as follows:
\begin{align}\label{formula_def1}
\mathcal{BS}(\by)=\arg \max_{\bx\in \Phi_k, \forall k}B_kP_k|\bx-\by|^{-\gamma},
\end{align}
where $\mathcal{BS}(\by)$ denotes the location of the BS associated with the UE, and $P_k|\bx-\by|^{-\gamma}$ is the received power from a tier-$k$ BS located at $\bx$, and $B_k$ is the association bias, indicating the connecting preference of a UE toward tier-$k$ BSs. In this case, the resultant cell splitting forms a generalized  Dirichlet tessellation, or weighted Poisson Voronoi \cite{PV}, shown in Fig.~\ref{fig_intro2}.
Note that for $B_1, B_2,\ldots, B_K$, their effects remain the same if we multiply all of them by the same positive constant. Thus, without loss of generality, we normalize them such that $\sum_{k=1}^{K}B_k=1$. Let $\bB=(B_1,B_2,\ldots, B_K)$.

Let $A_k$ denote the probability that a UE associates itself with a tier-$k$ BS, and $\bA=(A_1,A_2,\ldots,A_K)$.
As derived in \cite{JF1}, we have
\begin{align}\label{formula_Ak}
A_k=\frac{\lambda_k(P_kB_k)^{\frac{2}{\gamma}}}{\sum_{j=1}^{K}\lambda_j(P_jB_j)^{\frac{2}{\gamma}}},
\end{align}
and thus
\begin{align}\label{formula_Bk}
B_k=\frac{P_k^{-1}(A_k/\lambda_k)^{\frac{\gamma}{2}}}{\sum_{j=1}^{K}P_j^{-1}(A_j/\lambda_j)^{\frac{\gamma}{2}}}.
\end{align}
Hence, there is a one-to-one mapping between $\bA$ and $\bB$, so we can view them  interchangeably.

\subsection{Problem Statement}

We first  aim to derive a closed-form expression for the average UE data rate.  Then, our objective is to maximize the average UE data rate by jointly optimizing the spectrum allocation factors $\bfeta$ and the user association bias values $\bB$ (or equivalently $\bA$). Finally, we give a pricing scheme to incentivize each UE to adopt the designed $\bB$.

\section{Closed-form Average UE Data Rate}\label{section_performance}
In this section, we derive the  average UE data rate via stochastic geometric analysis.
 Consider a reference UE, termed as the \textit{typical  UE}, communicating with its  BS, termed as the \textit{typical BS}. 
We are interested in the typical UE since the average UE performance in
the system is the same as the performance of the typical UE   \cite{SG_Totorial1}.
Furthermore, due to the stationarity of UEs and   BSs, throughout this section we will re-define the coordinates so that the typical  UE is located at  $\bzero$.



First, we study the coverage probability given that the typical UE is associating with a tier-$k$ BS and their distance is $d$. In this case, the overall interference to the typical  UE is the sum interference from all tier-$k$ BSs other than the typical BS. 
Let $I_k(d)$ denote such interference. Then
 \begin{align}
 I_k(d)=\sum_{\bx \in \Phi_k'} \frac{P_kh_{\bx,\bzero}}{ \alpha|\bx|^{\gamma}}.
 \end{align}
 where $\Phi_k'$ is the Palm point process corresponding to all tier-$k$ BSs other than the typical BS, given that the typical BS is located at a distance of $d$ from the typical UE. It can be shown that $\Phi_k'$ is a PPP with intensity $0$ in $\mathcal{B}(\bzero, d)$ and intensity $\lambda_k$ in  $\mathbb{R}^2\backslash \mathcal{B}(\bzero, d)$, where $\mathcal{B}(\bzero, d)$ denotes the disk region centered at $\bzero$ with radius $d$ \cite{SG_Totorial1}.

 The distribution of  $I_k(d)$ is derived through its Laplace transform as follows:
\begin{align}
\nonumber&\mathcal{L}_{I_k}(d,s)=\mathbf{E}\left[\exp\left(-\sum_{\bx \in \Phi_k'} \frac{sP_kh_{\bx,\bzero}}{ \alpha|\bx|^{\gamma}}\right)\right]\\
\label{formula_Laplace1}=&\exp\left(-\lambda_k\int_{\mathbb{R}^2\backslash \mathcal{B}(\bzero, d)}\left(1-\int_{\mathbb{R}^+}e^{-\frac{sP_kh}{\alpha|\bx|^{\gamma}}}H(\dd h)\right)\dd\bx\right)\\
\label{formula_Laplace2}=&\exp\left(-\lambda_k\int_{\mathbb{R}^2\backslash \mathcal{B}(\bzero, d)}\frac{\frac{sP_k}{\alpha|\bx|^{\gamma}}}{\frac{sP_k}{\alpha|\bx|^{\gamma}}+1}\dd\bx\right)\\
\label{formula_LI}=&\exp\left(-2\pi\lambda_k\int_{d}^{\infty}\frac{\frac{sP_kr}{\alpha}}{\frac{sP_k}{\alpha}+r^{\gamma}}\dd r\right),
\end{align}
where (\ref{formula_Laplace1}) is obtained from the Laplace functional of PPP $\Phi_k'$ \cite{SG_Totorial1}, (\ref{formula_Laplace2}) is because $h$ is exponentially distributed with unit mean, and (\ref{formula_LI}) is through a transformation to polar coordinates.

 Let $P_{cover,k}(d)$ denote the conditional coverage probability of the typical UE (given $k$ and $d$). Then
\begin{align}
\nonumber P_{cover,k}(d) = &\mathbf{P}\left(\frac{P_kh_{\bx_B,\bzero}}{\alpha d^{\gamma}} \geq T I_k(d)\right) \\ \label{formula_Pc1}=&\mathcal{L}_{I_k}(d,s)|_{s=\frac{T\alpha d^{\gamma}}{P_k}},
\end{align}
where $\bx_B$ is the coordinate of the typical BS, and $|\bx_B|=d$.
Substituting  (\ref{formula_LI}) into (\ref{formula_Pc1}), we have
\begin{align}
\nonumber P_{cover,k}(d)
\label{formula_integral}=&\exp\left(-2\pi\lambda_k\int_{d}^{\infty}\frac{Td^{\gamma}r}{Td^{\gamma}+r^{\gamma}}\dd r\right)\\
 \overset {t=\frac{r^2}{T^{2/\gamma}d^2}}{=}&\exp\left(-\pi\lambda_kT^{\frac{2}{\gamma}}d^2\int_{(\frac{1}{T})^{\frac{2}{\gamma}}}^{\infty}\frac{1}{1+t^{\frac{\gamma}{2}}}\dd t\right).
\end{align}

Furthermore, the probability density function of the distance between the typical UE and its associated tier-$k$ BS is
\begin{align}
\label{formula_pdf1} f_k(d)=&\frac{2\pi \lambda_k}{A_k}d\exp\left(-\pi d^2\sum_{j=1}^{K}\lambda_j\left(\frac{P_jB_j}{P_kB_k}\right)^{\frac{2}{\gamma}}\right)\\
\label{formula_pdf2}=&\frac{2\pi \lambda_k}{A_k}d\exp\left(-\pi d^2\frac{\lambda_k}{A_k}\right),
\end{align}
where (\ref{formula_pdf1}) is derived in \cite{JF1}, and (\ref{formula_pdf2}) is by substituting (\ref{formula_Ak}) into (\ref{formula_pdf1}).

Hence, the coverage probability $P_{cover,k}$ of the typical UE associated with a tier-$k$ BS can be computed as
%
\begin{align}
\nonumber P_{cover,k}=&\int_{0}^{\infty}f_k(d)P_{cover,k}(d) \dd d\\
\nonumber=&\int_{0}^{\infty}\frac{2\pi\lambda_k}{A_k} d\exp\left(-\pi d^2\frac{\lambda_k}{A_k}\right) \\ \nonumber&\qquad\exp\left(-\pi\lambda_k\left(T\right)^{\frac{2}{\gamma}}d^2\int_{\left(\frac{1}{T}\right)^{\frac{2}{\gamma}}}^{\infty}\frac{1}{1+t^{\gamma/2}}\dd t\right)\dd d\\
=&\frac{1}{A_k}\frac{1}{\frac{1}{A_k}+C},
\end{align}
where $C=\left(T\right)^{\frac{2}{\gamma}}\int_{\left(\frac{1}{T}\right)^{\frac{2}{\gamma}}}^{\infty}\frac{1}{1+t^{\gamma/2}}\dd t$ is a system-level constant only related to $\gamma$ and $T$.

Let $\mathbf{E}_0(\beta_k)$ denote the expected spectrum bandwidth allocated to the typical UE (connecting to a tier-$k$ BS). Following the model in  Section \ref{subsection_allocation},
$\mathbf{E}_0(\beta_k)$ equals the spectrum bandwidth allocated to the typical tier-$k$ BS  divided by the average number of UEs associated with it conditioned on the typical UE, which is $A_k\mu/\lambda_k+1$. Hence,
\begin{align}
\mathbf{E}_0(\beta_k)=\frac{\eta_kW}{A_k\mu/\lambda_k+1}.
\end{align}
Then, by Section \ref{subsection_datarate}, the conditional expected data rate of the typical UE, given it is associated with a tier-$k$ BS, can be computed as \cite{JF1,Yicheng}\footnote{By doing so, we slightly underestimate the average data rate because the coverage event and $\beta_k$ are not completely independent.  Although some efforts have been made to approximate their correlation \cite{JF3, SizeDistribution}, all of them are inexact but result in tremendous mathematical complexity. 
In Section \ref{section_experiment}, we  show that  the resultant analysis is close to actual performance via simulations. }
\begin{align}
\overline{R}_k=\mathbf{E}_0(\beta_k)\log(1+T)P_{cover,k}.
\end{align}

Finally, the average data rate of the typical UE, and hence the average data rate per UE in the system, is
\begin{align}
\nonumber \mathbf{F}=&\sum_{k=1}^{K}A_k\overline{R}_k=\sum_{k=1}^{K}A_k\mathbf{E}_0(\beta_k)\log(1+T)P_{cover,k}\\
=&\sum_{k=1}^{K}\frac{\eta_kW\log(1+T)}{(A_k\mu/\lambda_k+1)(\frac{1}{A_k}+C)}.
\end{align}

Note that stochastic geometric analysis often leads to non-closed forms requiring numerical integrations (e.g., \cite{JF4, JF5, JF1,JF3}), due to the integral form of the Laplace functional or generating functional of PPPs applied in analysis \cite{SG_Totorial1,SG_book1}. Fortunately, our derived closed-form expression for the average UE data rate facilitates the tractability of the resultant optimization problem.

\section{Joint Optimization Problem and SSAUA}\label{section_optimization}

We aim to maximize the average UE data rate $\mathbf{F}$ with respect to $\bfeta$ and $\bB$.  As there is a one-to-one mapping between $\bA$ and $\bB$, we study the optimization problem over $(\bfeta, \bA)$ instead  for analytical convenience.
This is formally stated as  optimization problem $\mathbf{P}$  as follows:
\begin{align}
\nonumber& \underset{\bfeta,\bA}{\text{maximize}}
& & \mathbf{F}(\bfeta,\bA)=\sum_{k=1}^{K}\eta_kM_k(A_k) \\
\nonumber& \text{subject to}
\nonumber& & \sum_{k=1}^{K}\eta_k=1,\quad \eta_{\min,k}\leq\eta_k\leq\eta_{\max,k},\forall k,\\
\label{formula_optimal}& & & \sum_{k=1}^{K}A_k=1,\quad A_k\geq0,\forall k,
\end{align}
where $M_k(A_k)$ is defined as
\begin{align}
M_k(A_k)=\frac{1}{\left(A_k\mu/\lambda_k+1\right)\left(\frac{1}{A_k}+C\right)}.
\end{align}

Problem $\mathbf{P}$ is  non-convex and cannot be solved through a standard method.
Instead, we investigate into two important structures of the optimal solution, termed density thresholding and priority ordering, based on which we propose a computationally efficient Structured Spectrum Allocation and User Association (SSAUA) approach to solve  the problem.

\subsection{Density Thresholding Structure}\label{subsection_two}

First, we define an important parameter
\begin{align}\label{formula_a}
a_k\triangleq\sqrt{\lambda_k/(\mu C)}.
\end{align}
Note that $M_k(A_k)$ is increasing on $[0,a_k]$ and decreasing on $[a_k,\infty)$.
We further observe several useful properties of $M_k(A_k)$, which are presented in Appendix \ref{appendix-Mk}. Based on these properties, we obtain the following lemma,  whose proof is given in Appendix \ref{appendix-lemma1}.

\begin{lemma}\label{lemma1}
Consider  a potential solution $(\betass, \bAss)$ to Problem $\mathbf{P}$. If $\exists i\neq j$, such that $\Ass_i<a_i$ and $\Ass_j>a_j$, then $(\betass, \bAss)$ is not an optimal solution.
\end{lemma}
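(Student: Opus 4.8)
The plan is to argue by contradiction using a local exchange of association mass between tiers $i$ and $j$, holding the spectrum allocation fixed. Suppose $(\betass,\bAss)$ is optimal for $\mathbf{P}$. Keeping $\bfeta=\betass$ unchanged, define for small $\epsilon>0$ a perturbed association vector $\bAss(\epsilon)$ that agrees with $\bAss$ in every coordinate except $i$ and $j$, where we set $\Ass_i(\epsilon)=\Ass_i+\epsilon$ and $\Ass_j(\epsilon)=\Ass_j-\epsilon$. Since $\Ass_j>a_j\ge 0$ and $\Ass_i+\Ass_j\le\sum_k\Ass_k=1$, for all sufficiently small $\epsilon>0$ we still have $\Ass_j(\epsilon)\ge 0$ and $\Ss_i(\epsilon)\le 1$ (and trivially $\Ass_i(\epsilon)\ge 0$); moreover $\sum_k\Ass_k(\epsilon)=1$ is preserved and $\betass$ already satisfies its own constraints, so $(\betass,\bAss(\epsilon))$ is feasible for $\mathbf{P}$.

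Next I would evaluate the change of the objective. Because only coordinates $i$ and $j$ move while $\betass$ is fixed,
\begin{align}
\nonumber &\mathbf{F}(\betass,\bAss(\epsilon))-\mathbf{F}(\betass,\bAss)\\
\nonumber ={}&\eta_i^{**}\bigl[M_i(\Ass_i+\epsilon)-M_i(\Ass_i)\bigr]+\eta_j^{**}\bigl[M_j(\Ass_j-\epsilon)-M_j(\Ass_j)\bigr].
\end{align}
By the shape of each $M_k$ recalled in Appendix~\ref{appendix-Mk} (strictly increasing on $[0,a_k]$ and strictly decreasing on $[a_k,\infty)$, which also follows directly from the sign of $M_k'$), and using the hypotheses $\Ass_i<a_i$ and $\Ass_j>a_j$, one can pick $\epsilon>0$ small enough that simultaneously $\Ass_i+\epsilon\le a_i$ and $\Ass_j-\epsilon\ge a_j$; then $M_i(\Ass_i+\epsilon)>M_i(\Ass_i)$ and $M_j(\Ass_j-\epsilon)>M_j(\Ass_j)$. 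Finally, the standing assumption $\eta_{\min,k}>0$ forces $\eta_i^{**}\ge\eta_{\min,i}>0$ and $\eta_j^{**}\ge\eta_{\min,j}>0$, so both weighted brackets are strictly positive. Hence $\mathbf{F}(\betass,\bAss(\epsilon))>\mathbf{F}(\betass,\bAss)$, contradicting optimality of $(\betass,\bAss)$, and the lemma follows.

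I do not expect a real obstacle here: the result is a one-line consequence of the fact that each $M_k$ is unimodal with peak at $a_k$, combined with the freedom to transfer association probability from a tier that is ``over-served'' ($\Ass_j>a_j$, so $M_j$ decreasing) to one that is ``under-served'' ($\Ass_i<a_i$, so $M_i$ increasing) while keeping $\sum_k A_k=1$ and $\bfeta$ intact. The only points needing care are purely routine: (i) that a single $\epsilon$ can meet all three upper bounds $\epsilon\le\Ass_j$, $\epsilon\le a_i-\Ass_i$, $\epsilon\le\Ass_j-a_j$ at once, which is immediate since each right-hand side is strictly positive under the hypotheses; and (ii) that one uses \emph{strict} monotonicity of $M_k$ (hence the properties in Appendix~\ref{appendix-Mk}, not merely the location of the maximizer $a_k$) to get a strict, rather than weak, improvement.
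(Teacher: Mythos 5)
Your proposal is correct and is essentially the paper's own argument: the paper likewise perturbs $\Ass_i\mapsto\Ass_i+\Delta$ and $\Ass_j\mapsto\Ass_j-\Delta$ with $\bfeta$ fixed and invokes the unimodality of $M_k$ (property (M-1)) to obtain a strictly better feasible point, contradicting optimality. Your additional remarks on choosing $\epsilon$ small enough for feasibility and on $\eta_k\geq\eta_{\min,k}>0$ ensuring a \emph{strict} improvement are routine details the paper leaves implicit.
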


Lemma \ref{lemma1} suggests that, in an optimal solution, every
$A_k$ must be on the same side of $a_k$.  This directly leads  to the following theorem, which is fundamental to our optimization solution.
\begin{theorem}\label{theorem_branch}
\textbf{(Density Thresholding)} Let $(\betas, \bAs)$ be an optimal solution to Problem $\mathbf{P}$. If $\sum_{k=1}^{K}a_k>1$, then $\forall k, \As_k\leq a_k$; if $\sum_{k=1}^{K}a_k<1$, then $\forall k, \As_k\geq a_k$;   if $\sum_{k=1}^{K}a_k=1$, then $\forall k, \As_k= a_k$.
\end{theorem}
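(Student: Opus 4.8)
The plan is to deduce Theorem~\ref{theorem_branch} from Lemma~\ref{lemma1} together with a feasibility/continuity argument that controls the quantity $\sum_k A_k^*$. By Lemma~\ref{lemma1}, an optimal $(\betas,\bAs)$ cannot simultaneously have some $\As_i < a_i$ and some $\As_j > a_j$; hence exactly one of three mutually exclusive situations holds: either $\As_k \le a_k$ for all $k$, or $\As_k \ge a_k$ for all $k$, or there is no index with strict inequality on either side, i.e.\ $\As_k = a_k$ for all $k$ (the last being the degenerate overlap of the first two). So the whole content of the theorem is to tie the case that actually occurs to the sign of $\sum_{k=1}^K a_k - 1$, using the constraint $\sum_{k=1}^K \As_k = 1$.

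First I would handle the case $\sum_k a_k > 1$. Suppose, for contradiction, that the optimal solution falls into the ``$\As_k \ge a_k$ for all $k$'' alternative. Then $1 = \sum_k \As_k \ge \sum_k a_k > 1$, a contradiction; so the solution must instead satisfy $\As_k \le a_k$ for all $k$, which is the claim. Symmetrically, if $\sum_k a_k < 1$, the alternative ``$\As_k \le a_k$ for all $k$'' would force $1 = \sum_k \As_k \le \sum_k a_k < 1$, again impossible, so $\As_k \ge a_k$ for all $k$. Finally, if $\sum_k a_k = 1$: by Lemma~\ref{lemma1} the optimum lies in one of the two alternatives, say $\As_k \le a_k$ for all $k$; summing gives $1 = \sum_k \As_k \le \sum_k a_k = 1$, so equality holds termwise, i.e.\ $\As_k = a_k$ for all $k$. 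The symmetric subcase gives the same conclusion.

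One subtlety I would make sure to address is that Lemma~\ref{lemma1} as stated rules out the \emph{strict} mixed configuration $\As_i < a_i,\ \As_j > a_j$, but an optimal solution could a priori have $\As_i = a_i$ for some indices while others satisfy a strict inequality in one direction. This is not a real gap: an index with $\As_k = a_k$ belongs to both $\{\As_k \le a_k\}$ and $\{\As_k \ge a_k\}$, so the trichotomy above still covers every case, and the summation arguments go through verbatim. I expect the only ``obstacle'' here to be purely bookkeeping — verifying that the existence of an optimal solution (needed so that ``$(\betas,\bAs)$'' is meaningful) is justified, which follows from continuity of $\mathbf{F}$ on the compact feasible set of Problem~$\mathbf{P}$. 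No further computation is needed; the theorem is essentially a one-line corollary of Lemma~\ref{lemma1} plus the normalization $\sum_k \As_k = 1$.
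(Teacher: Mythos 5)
Your proposal is correct and follows essentially the same route as the paper: the paper likewise deduces from $\sum_k \As_k = 1$ and $\sum_k a_k > 1$ that some $\As_l < a_l$, and then invokes Lemma~\ref{lemma1} to conclude $\As_k \le a_k$ for all $k$, treating the other two cases symmetrically. Your contrapositive phrasing and the remark about indices with $\As_k = a_k$ are just minor repackagings of the same argument.
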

\begin{proof}
If $\sum_{k=1}^{K}a_k>1$,  because $\sum_{k=1}^{K}\As_k=1$, $\exists l$ such that $\As_l<a_l$. This leads to  $\As_k\leq a_k$, $\forall k$, according to Lemma \ref{lemma1}.
The cases where  $\sum_{k=1}^{K}a_k<1$ and $\sum_{k=1}^{K}a_k=1$ are similar.
%
%
\end{proof}

Note that, the condition $\sum_{i=1}^{K}a_i>1$ is equivalent to $\sqrt{\frac{1}{ C} }\left(\sum_{i=1}^{K}\sqrt{\lambda_i}\right)>\sqrt{\mu}$,
 implying the density of UEs is sparse (compared with that of BSs). Thus, we refer to the case $\sum_{i=1}^{K}a_i>1$ as the \emph{sparse UE scenario}. On the other hand, we refer to the case  $\sum_{i=1}^{K}a_i<1$, which is equivalent to $\sqrt{\frac{1}{ C}}\left(\sum_{i=1}^{K}\sqrt{\lambda_i}\right)<\sqrt{\mu}$,  as the \emph{dense UE scenario}. If $\sum_{i=1}^{K}a_i=1$, Problem $\mathbf{P}$ can  be trivially solved and is ignored in the rest of our discussion. Note that because $a_k$ can be computed directly from the given parameters, one can judge which scenario Problem $\mathbf{P}$ falls within before solving the problem.
Next, the solution to $\mathbf{P}$ will be investigated separately in the sparse UE and dense UE scenarios.

\subsection{SSAUA  in the Sparse UE Scenario}\label{subsection_sparse}
 In this case, the original Problem $\mathbf{P}$ becomes Problem $\mathbf{P1}$ as follows:
\begin{align}
\nonumber& \underset{\bfeta, \bA}{\text{maximize}}
& & \mathbf{F}(\bfeta, \mathbf{A})=\sum_{k=1}^{K}\eta_k M_k(A_k)\\
\nonumber& \text{subject to}
\nonumber& & \sum_{k=1}^{K}\eta_k=1, \quad\eta_{\min,k}\leq\eta_k\leq\eta_{\max,k},\forall k,\\
\label{formula_optimalP1}& & & \sum_{k=1}^{K}A_k=1,\quad0\leq A_k\leq a_k,\forall k.
\end{align}

We first observe an important ordering property of the optimal solution to $\mathbf{P1}$, as shown in the following lemma, whose proof is given in Appendix \ref{appendix-lemma_order1}.
\begin{lemma}\textbf{(Ordering Property)}\label{lemma_order1}
Let  $\bAs$ be  optimal for $\mathbf{P1}$, then $M_1(\As_1)\leq M_2(\As_2)\leq \ldots\leq M_K(\As_K)$.
\end{lemma}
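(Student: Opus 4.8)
The plan is to argue by contradiction using an exchange argument on the spectrum allocation factors $\bfeta$. Suppose $\bAs$ is optimal for $\mathbf{P1}$ but the ordering $M_1(\As_1)\leq M_2(\As_2)\leq\ldots\leq M_K(\As_K)$ fails, so there exist indices $i<j$ with $M_i(\As_i)>M_j(\As_j)$. First I would observe that, for a \emph{fixed} feasible $\bA$, the objective $\mathbf{F}(\bfeta,\bA)=\sum_k\eta_k M_k(A_k)$ is linear in $\bfeta$, so its maximum over the feasible polytope $\{\sum_k\eta_k=1,\ \eta_{\min,k}\leq\eta_k\leq\eta_{\max,k}\}$ is attained by a greedy ``water-filling'' assignment: put as much weight as possible on the coordinates with the largest $M_k(A_k)$, subject to the box constraints. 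Hence I may assume $\betas$ has this greedy structure with respect to $\bAs$.

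Next comes the key step: I want to perturb $\bA$ by moving a small mass $\epsilon>0$ from $\As_j$ to $\As_i$, i.e. set $A_i'=\As_i+\epsilon$, $A_j'=\As_j-\epsilon$, leaving all other coordinates fixed. This keeps $\sum_k A_k=1$; I must check it stays feasible, which uses the sparse-scenario structure. Since $i<j$ we have $a_i=\sqrt{\lambda_i/(\mu C)}\leq a_j=\sqrt{\lambda_j/(\mu C)}$ because $\lambda_i<\lambda_j$; the constraint $A_j'\geq 0$ holds for small $\epsilon$, and $A_i'\leq a_i$ will follow once I know $\As_i<a_i$ — if instead $\As_i=a_i$ then $M_i(\As_i)=\max_{A}M_i(A)$ is the peak value and I would want the symmetric perturbation moving mass the other way; I will organize the argument so that whichever of the two boundary situations occurs, one admissible direction of perturbation is available. (One clean way: if $M_i(\As_i)>M_j(\As_j)$ with $i<j$, first handle the case $\As_j>0$ and $\As_i<a_i$ by the above; the degenerate cases $\As_j=0$ or $\As_i=a_i$ can be reduced to properties of $M_k$ from Appendix~\ref{appendix-Mk}, e.g. that $M_k$ is increasing on $[0,a_k]$, so $\As_j=0$ forces $M_j(\As_j)$ to be the minimum possible and an analogous transfer still strictly helps.)

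Then I compute the first-order change in the objective. Because $\betas$ is held fixed, $\mathbf{F}(\betas,\bA')-\mathbf{F}(\betas,\bAs)=\etas_i\bigl(M_i(A_i')-M_i(\As_i)\bigr)+\etas_j\bigl(M_j(A_j')-M_j(\As_j)\bigr)$. Here the greedy structure of $\betas$ is essential: since $M_i(\As_i)>M_j(\As_j)$ and $\betas$ loads the higher-$M$ coordinates first subject to the boxes, I expect $\etas_i\geq\etas_j$ is not quite what I need directly; rather I need that after re-optimizing $\bfeta$ for $\bA'$ the value does not drop. The cleanest route is: re-solve for the optimal $\bfeta'$ given $\bA'$ and compare. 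Increasing $A_i$ toward $a_i$ \emph{increases} $M_i(A_i)$ (since $M_i$ is increasing on $[0,a_i]$) and decreasing $A_j$ also \emph{increases} $M_j(A_j)$ on the relevant range if $\As_j\leq a_j$ and we are decreasing it — wait, decreasing $A_j$ below $a_j$ decreases $M_j$; so the two effects on the objective have opposite signs through the $M$-values, but the point is that the \emph{sorted order} of the $M$-values changes so as to permit a strictly better greedy $\bfeta$-allocation, because we are pushing the larger value $M_i$ up and this lets the box-constrained greedy allocation concentrate more total weight on large values. I will make this precise by writing the optimal-$\bfeta$ value as a concave, nondecreasing function of the sorted vector of $M$-values and checking the perturbation moves that sorted vector in a direction along which this function strictly increases. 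This contradicts optimality of $(\betas,\bAs)$, proving the lemma.

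The main obstacle I anticipate is handling the coupling between $\bfeta$ and $\bA$ cleanly: a naive ``fix $\bfeta$, perturb $\bA$'' argument can fail at the boundary of the $\bfeta$-box, and a naive ``the coordinate with larger $M$ gets larger $\eta$'' claim is only true in the interior. I expect the right fix is to phrase everything in terms of the value function $V(\bA)=\max_{\bfeta\ \mathrm{feasible}}\sum_k\eta_kM_k(A_k)$, note that $V$ depends on $\bA$ only through the multiset $\{M_k(A_k)\}$ via a Schur-convex-type functional (greedy fill against an ordered box), and then show that if the ordering of $M_k(\As_k)$ against the index order is violated, one can transfer association mass to strictly increase $V$, using monotonicity of each $M_k$ on $[0,a_k]$ together with $a_i\le a_j$ for $i<j$. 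The delicate bookkeeping of which $M$-value is the larger one and in which direction to transfer mass is where the care is needed; everything else is first-order perturbation.
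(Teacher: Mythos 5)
There is a genuine gap, and it sits exactly where you flag your own discomfort. Your chosen perturbation---moving a small mass $\epsilon$ from $\As_j$ to $\As_i$---does not have a sign-definite effect on the objective: for fixed $\betas$ the first-order change is $\epsilon\bigl(\etas_i M_i'(\As_i)-\etas_j M_j'(\As_j)\bigr)+o(\epsilon)$, which can be negative (e.g.\ when $\etas_j$ is large and $\As_j$ is small, so $M_j'(\As_j)$ is large). Your fallback of re-optimizing $\bfeta$ and viewing the value function $V(\bA)=\max_{\bfeta}\sum_k\eta_k M_k(A_k)$ as a Schur-convex functional of the sorted $M$-values does not rescue this: the boxes $[\eta_{\min,k},\eta_{\max,k}]$ are attached to the \emph{indices} $k$, not to the ranks of the $M$-values, so $V$ is not a function of the multiset $\{M_k(A_k)\}$ alone; and even granting such a representation, you have not shown that pushing $M_i$ up while pulling $M_j$ down moves the sorted vector in a direction that increases the greedy fill. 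The feasibility caveats you raise ($\As_i=a_i$, $\As_j=0$) are also left unresolved. In short, the proposal is a plan with the decisive inequality still missing.

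The missing ingredient is a \emph{comparative} property of $M_i$ versus $M_j$, namely property (M-4) of Appendix~\ref{appendix-Mk}: for $\lambda_i<\lambda_j$, the difference $M_j(A)-M_i(A)$ is strictly increasing on $[0,a_j]$. Monotonicity of each $M_k$ separately, which is all you invoke, cannot yield the lemma. With (M-4) in hand, the paper's argument is a \emph{full swap} rather than an infinitesimal transfer: from $M_i(\As_i)>M_j(\As_j)$ one deduces $\As_i>\As_j$, and then setting $\Asc_i=\As_j$, $\Asc_j=\As_i$ is feasible (since $\As_j<\As_i\leq a_i\leq a_j$) and changes the objective by $\etas_j[M_j(\As_i)-M_j(\As_j)]+\etas_i[M_i(\As_j)-M_i(\As_i)]$; when $\etas_i\leq\etas_j$ this is at least $\etas_i$ times $[M_j(\As_i)-M_i(\As_i)]-[M_j(\As_j)-M_i(\As_j)]>0$ by (M-4). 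When $\etas_i>\etas_j$ one additionally swaps $\etas_i\leftrightarrow\etas_j$, which is feasible precisely because of the assumed monotone ordering of $\eta_{\min,k}$ and $\eta_{\max,k}$---this neatly disposes of the $\bfeta$-coupling that your proposal tries to handle through the value function $V$. I recommend reworking your argument around a swap and a proof of the single-crossing property (M-4).
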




Next, by sequentially computing $\betas$ as follows:
\begin{align}\label{formula_solutioneta}
\left\{
  \begin{array}{l}
    \etas_K=\min(1-\sum_{k=1}^{K-1}\eta_{\min,k},\eta_{\max,K}), \\
    \etas_{K-1}=\min(1-\etas_K-\sum_{k=1}^{K-2}\eta_{\min,k},\eta_{\max,K-1}), \\
    \ldots,\\
     \etas_{l}=\min(1-\sum_{k=l+1}^{K}\etas_k-\sum_{k=1}^{l-1}\eta_{\min,k},\eta_{\max,l}), \\
    \ldots,\\
     \etas_{1}=\min(1-\sum_{k=2}^{K}\etas_k,\eta_{\max,1}), \\
  \end{array}
\right.
\end{align}
we have the following theorem:
\begin{theorem}\label{theorem1}
\textbf{(Priority Ordering)} Let $\bAs$ be  optimal for Problem $\mathbf{P1}$, then $(\betas, \bAs)$, where $\betas$ is computed in (\ref{formula_solutioneta}),  is an optimal solution to $\mathbf{P1}$.
\end{theorem}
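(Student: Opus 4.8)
The plan is to show that, given an optimal $\bAs$ for $\mathbf{P1}$, the spectrum allocation $\betas$ defined by the greedy rule (\ref{formula_solutioneta}) maximizes $\sum_k \eta_k M_k(\As_k)$ over the feasible polytope $\{\bfeta : \sum_k \eta_k = 1, \eta_{\min,k} \le \eta_k \le \eta_{\max,k}\}$. Since $\bAs$ is fixed here, this is a \emph{linear} program in $\bfeta$ with the objective coefficients $c_k \triangleq M_k(\As_k)$, so an optimal $\bfeta$ can be found greedily: pour as much mass as possible (subject to the upper bounds) into the coordinates with the largest coefficients, while always reserving enough room for the lower bounds of the remaining coordinates. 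By Lemma~\ref{lemma_order1}, the coefficients are already sorted: $c_1 \le c_2 \le \cdots \le c_K$, so tier $K$ has (weakly) the highest priority, tier $K-1$ the next, and so on — which is exactly the order in which (\ref{formula_solutioneta}) assigns values.

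The key steps, in order, are: (i) Fix $\bAs$ optimal for $\mathbf{P1}$ and observe that minimizing over $\bfeta$ reduces to the linear program $\max_{\bfeta} \sum_k c_k \eta_k$ with $c_k = M_k(\As_k)$, subject to the simplex-with-box constraints. (ii) Verify that $\betas$ from (\ref{formula_solutioneta}) is feasible: each $\etas_l$ is clearly $\le \eta_{\max,l}$ by construction, and one checks by a downward induction on $l$ that $\etas_l \ge \eta_{\min,l}$ — the $\min$ in the definition never drops below $\eta_{\min,l}$ because $\sum_{k=1}^{K}\eta_{\min,k} \le 1$ guarantees the "remaining budget" term $1 - \sum_{k>l}\etas_k - \sum_{k<l}\eta_{\min,k}$ stays at least $\eta_{\min,l}$; also $\sum_k \etas_k = 1$ follows since the last assigned coordinate $\etas_1$ absorbs the remainder and the $\eta_{\max}$ cap is never binding there because $\sum_k \eta_{\max,k} \ge 1$. (iii) Prove optimality of $\betas$ for the LP: take any feasible $\bfeta'$ and show $\sum_k c_k \etas_k \ge \sum_k c_k \eta_k'$. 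The cleanest argument is an exchange/coupling argument — look at the largest index $l$ where $\eta_l' \ne \etas_l$; by the greedy construction $\etas_l \ge \eta_l'$ is impossible only if moving mass to $\eta_l$ would violate feasibility, but feasibility of $\bfeta'$ forces $\sum_{k \le l}\eta_k' \ge \sum_{k\le l}\eta_{\min,k}$... more directly, since $c_l \ge c_k$ for all $k \le l$, any feasible reallocation that shifts mass from index $l$ down to lower indices (as $\bfeta'$ does relative to $\betas$) cannot increase the objective; iterating this down from $l = K$ gives the claim. (iv) Combine (iii) with the optimality of $\bAs$: since $(\betas, \bAs)$ attains, for the fixed $\bAs$, the best possible objective over $\bfeta$, and $\bAs$ is by hypothesis the $\bA$-component of some optimal solution $(\widetilde{\bfeta}, \bAs)$ of $\mathbf{P1}$, we have $\mathbf{F}(\betas, \bAs) \ge \mathbf{F}(\widetilde{\bfeta}, \bAs) = \max \mathbf{F}$, so $(\betas, \bAs)$ is optimal.

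The main obstacle I anticipate is step (ii)/(iii) done carefully: the box constraints $\eta_{\min,k} \le \eta_k \le \eta_{\max,k}$ interact with the simplex constraint, so the greedy rule is not simply "fill the top coefficient to its max" — it is "fill to its max \emph{or} to whatever leaves exactly the lower bounds for everything not yet assigned," and one must confirm (\ref{formula_solutioneta}) encodes precisely this and that the resulting point is both feasible and LP-optimal. The exchange argument in (iii) needs the monotonicity $c_1 \le \cdots \le c_K$ from Lemma~\ref{lemma_order1} in an essential way — without it the greedy order would be wrong — and it needs the observation that whenever $\etas_l < \eta_{\max,l}$, the greedy rule has already driven the lower-indexed budget down to its floor $\sum_{k<l}\eta_{\min,k}$, so no feasible competitor can have put \emph{more} into a lower index without having put \emph{less} into some index $\ge l$; pushing that mass back up weakly increases $\sum_k c_k \eta_k$. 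Formalizing this "push mass up to the highest-priority slot with remaining capacity" induction is the technical heart of the proof; everything else is bookkeeping with the constraints $\sum_k \eta_{\min,k} \le 1 \le \sum_k \eta_{\max,k}$.
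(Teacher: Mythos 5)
Your proposal is correct and follows essentially the same route as the paper: fix the optimal $\bAs$, reduce the $\bfeta$-subproblem to the linear program the paper calls $\mathbf{P1A}$, invoke Lemma~\ref{lemma_order1} to order the coefficients $M_k(\As_k)$, and argue that the greedy allocation (\ref{formula_solutioneta}) is feasible and LP-optimal. You merely spell out the feasibility induction and the exchange argument that the paper compresses into ``it is easy to verify.''
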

\begin{proof}
 Consider Problem $\mathbf{P1A}$ as follows:
\begin{align}
\nonumber& \underset{\bfeta}{\text{maximize}}
& & \sum_{k=1}^{K}\eta_k M_k(\As_k) \\
& \text{subject to}
& & \sum_{k=1}^{K}\eta_k=1,\quad\eta_{\min,k}\leq\eta_k\leq\eta_{\max,k},\forall k.
\end{align}

Then $\mathbf{P1A}$ is a simple linear programming problem with ordered linear coefficients in the objective, since $M_1(\As_1)\leq M_2(\As_2)\leq \ldots\leq M_K(\As_K)$ due to Lemma \ref{lemma_order1}. Note that $\betas$ does not depend on the exact values of $\bAs$; it only requires the ordering property as shown in Lemma \ref{lemma_order1}. Also, $\betas$ is in the feasible region due to $\sum_{k=1}^{K} \eta_{\min,k} \leq 1\leq \sum_{k=1}^{K} \eta_{\max,k}$.
It is easy to verify that, $(\betas, \bAs)$ is an optimal solution to $\mathbf{P1}$.
\end{proof}


Equation (\ref{formula_solutioneta}) indicates the priority ordering structure in spectrum allocation. We see that tier-$K$ has the highest priority in spectrum allocation, followed by tier-$(K-1)$, and so forth.

Theorem \ref{theorem1} provides a means to derive an optimal $\betas$ regardless of the $\bAs$ values. We need one  further step to derive the corresponding optimal $\bAs$ by solving the following Problem $\mathbf{P1B}$:
\begin{align}
\nonumber& \underset{\bA}{\text{maximize}}
& & \sum_{k=1}^{K}\etas_k M_k(A_k) \\
& \text{subject to}
& & \sum_{k=1}^{K}A_k=1, \quad 0\leq A_k\leq a_k,\forall k.
\end{align}
Note that $\mathbf{P1B}$ is a  convex programming problem, since $M_k(A_k)$ is concave on $[0,a_k]$. Thus, $\bAs$ can be computed by a computationally efficient algorithm, such as the interior point method. Hence the both steps to compute the jointly optimal solution $(\betas, \bAs)$ have low computational complexity.


\subsection{SSAUA in the Dense UE Scenario with Performance Bound}\label{subsection_dense}
In this case, the original Problem $\mathbf{P}$ becomes Problem $\mathbf{P2}$ as follows:
\begin{align}
\nonumber& \underset{\bfeta,\bA}{\text{maximize}}
& & \mathbf{F}(\bfeta, \mathbf{A})=\sum_{k=1}^{K}\eta_kM_k(A_k)\\
\nonumber& \text{subject to}
\nonumber& & \sum_{k=1}^{K}\eta_k=1,\quad\eta_{\min}\leq\eta_k\leq\eta_{\max},\forall k,\\
\label{formula_optimalP2}& & & \sum_{k=1}^{K}A_k=1, \quad A_k\geq a_k,\forall k.
\end{align}
Problem $\mathbf{P2}$ is more complicated compared with Problem $\mathbf{P1}$, as $M_k(A_k)$ is not concave,
 but an S-shaped function, in the feasible region. Hence, $\mathbf{P2}$ generally incurs high computational complexity even if an optimal $\betas$ is given
\cite{Optimization_SB, Optimization_MC}.

Therefore, instead of directly solving $\mathbf{P2}$, we first approximate $M_k(A_k)$ by $\Mb_k(A_k)$ defined as follows:
\begin{align}
\Mb_k(A_k)=\frac{1}{\left(A_k\mu/\lambda_k\right)\left(\frac{1}{A_k}+C\right)}.
\end{align}
Note that this approximation is reasonable because $A_k\mu/\lambda_k$ is much larger than $1$ when $\mu$ is large (i.e., the dense UE scenario). This observation is also supported by the small performance gap as derived in Section \ref{subsubsection_bound}.  Some useful properties of $\Mb_k(A_k)$ are shown in Appendix \ref{appendix-Mbk}.

The approximated problem is referred to as Problem $\mathbf{P2A}$, where we simply replace the objective function of $\mathbf{P2}$ by the following:
\begin{align}
\label{formula_optimalp2a} \mathbf{F'}(\bfeta,\bA)=\sum_{k=1}^{K}\eta_k\Mb_k(A_k).
\end{align}

\subsubsection{Solution to $\mathbf{P2A}$}

The important  ordering property still holds for Problem $\mathbf{P2A}$, as formalized in the following lemma, whose proof is given in Appendix \ref{appendix-lemma_order2}.

\begin{lemma}\label{lemma_order2}
\textbf{(Ordering Property)} Let $(\betasb, \bAsb)$ be an optimal solution to $\mathbf{P2A}$, then $\Mb_1(\Asb_1)\leq \Mb_2(\Asb_2)\leq \ldots\leq \Mb_K(\Asb_K)$.
\end{lemma}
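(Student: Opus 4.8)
\textbf{Proof proposal for Lemma~\ref{lemma_order2}.}

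The plan is to argue by contradiction using a local exchange (perturbation) argument, exactly paralleling the proof of Lemma~\ref{lemma_order1} but now with the modified function $\Mb_k$ in place of $M_k$, and paying attention to the fact that the feasible set for $\bA$ here is $\{A_k\geq a_k\}$ rather than $\{0\leq A_k\leq a_k\}$. Suppose $(\betasb,\bAsb)$ is optimal for $\mathbf{P2A}$ but the claimed ordering fails, so there exist indices $i<j$ (or more precisely a pair with $\Mb_i(\Asb_i)>\Mb_j(\Asb_j)$) for which we want to derive an improvement. The key observation is that in the objective $\mathbf{F'}(\bfeta,\bA)=\sum_k \eta_k \Mb_k(A_k)$, the coefficients $\eta_k$ are fixed once we fix $\betasb$, so if $\Mb_i(\Asb_i)$ and $\Mb_j(\Asb_j)$ are ``out of order'' relative to the corresponding $\eta$ values we can swap mass between them or swap the spectrum shares.

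First I would isolate the relevant structural facts about $\Mb_k$ from Appendix~\ref{appendix-Mbk} that I will need: most importantly, $\Mb_k(A_k)=\frac{\lambda_k}{\mu(1+CA_k)}$ is \emph{strictly decreasing} on $[a_k,\infty)\subseteq(0,\infty)$, so on the dense-scenario feasible region each $\Mb_k$ is invertible and monotone. This is the essential simplification over $M_k$, which is S-shaped. Next I would set up the exchange: fix all components of $\bAsb$ and $\betasb$ except those in a chosen pair of coordinates, and consider moving a small amount $\epsilon>0$ of ``$A$-mass'' from coordinate $i$ to coordinate $j$ (legal as long as $\Asb_i-\epsilon\geq a_i$), or alternatively moving a small amount of ``$\eta$-mass.'' The change in the objective is, to first order, governed by $\eta_i \Mb_i'(\Asb_i)$ versus $\eta_j \Mb_j'(\Asb_j)$ for the $A$-exchange, and by $\Mb_i(\Asb_i)-\Mb_j(\Asb_j)$ for the $\eta$-exchange. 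I would show that whenever the ordering $\Mb_1(\Asb_1)\leq\cdots\leq\Mb_K(\Asb_K)$ is violated, one of these two exchanges (or a combined exchange that simultaneously adjusts $\eta$ and $A$ so as to stay feasible) yields a strictly positive first-order gain, contradicting optimality. When a coordinate is at its boundary $\Asb_i=a_i$ or $\eta_i$ is at a box constraint, I would exchange in the direction that is feasible and check that the sign of the gain still works out; this is the kind of case-checking that the companion lemma's appendix proof presumably already handles and which I would import nearly verbatim.

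The main obstacle I anticipate is the bookkeeping around the coupling of the two simplices (the $\betasb$-constraint and the $\bAsb$-constraint) together with their box constraints: a naive single-variable perturbation may be infeasible because both $\sum\eta_k=1$ and $\sum A_k=1$ must be preserved, so exchanges have to be done in compensating pairs, and at boundary points one must be careful about which direction of perturbation is admissible. The cleanest route, which I would try first, is to decouple the argument: use Theorem-style reasoning to note that for \emph{fixed} $\bAsb$ the optimal $\betasb$ must put more weight on coordinates with larger $\Mb_k(\Asb_k)$ (a linear program over a box-constrained simplex, solved greedily as in~(\ref{formula_solutioneta})), and then for \emph{fixed} $\betasb$ optimize over $\bA$; combining the two monotone relationships should force the ordering. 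If that decoupling does not immediately close the argument because of ties or boundary effects, I would fall back on the direct combined-exchange perturbation. I expect the write-up to be short, essentially a pointer to Appendix~\ref{appendix-lemma_order2} mirroring the structure of the proof of Lemma~\ref{lemma_order1}, with the only genuinely new ingredient being the strict monotonicity of $\Mb_k$ on the dense-scenario domain.
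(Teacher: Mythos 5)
Your overall shape---a contradiction via an exchange argument mirroring Lemma~\ref{lemma_order1}---matches the paper, but the specific mechanism you propose has a genuine gap: a first-order (infinitesimal) perturbation does not prove this lemma. For the $A$-exchange, the first-order gain is governed by $\etasb_i\Mb_i'(\Asb_i)$ versus $\etasb_j\Mb_j'(\Asb_j)$, i.e.\ by $\etasb_i\lambda_i/(1+C\Asb_i)^2$ versus $\etasb_j\lambda_j/(1+C\Asb_j)^2$, and the sign of this difference is not determined by the hypothesis $\Mb_i(\Asb_i)>\Mb_j(\Asb_j)$ (which compares function values $\lambda_k/(1+C\Asb_k)$, not $\eta$-weighted derivatives); so this exchange cannot produce the contradiction. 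The $\eta$-exchange does target the right quantity, but it is blocked exactly when $\etasb_i=\eta_{\max,i}$ or $\etasb_j=\eta_{\min,j}$, and LP optimality of $\betasb$ for fixed $\bAsb$ only tells you that one of these boundary cases must hold---it does not rule out the disorder. Your ``decoupling'' fallback also fails for the same reason: the box constraints are ordered by $k$, not by $\Mb_k(\Asb_k)$, so greedy optimality of $\betasb$ plus vertex structure of $\bAsb$ does not force the ordering (e.g.\ at a vertex with $\Asb_l$ large one can have $\Mb_i(a_i)>\Mb_l(\Asb_l)$ for $i<l$).

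What the paper actually does is a \emph{discrete} swap, not a local one: when $\etasb_i\leq\etasb_j$ it exchanges the full values $\Asc_i=\Asb_j$, $\Asc_j=\Asb_i$ and bounds the gain via property (M-3') (that $\Mb_j(A)-\Mb_i(A)$ is strictly decreasing); when $\etasb_i>\etasb_j$ it additionally swaps $\etasb_i$ and $\etasb_j$, which stays feasible because $\eta_{\min,i}\leq\eta_{\min,j}$ and $\eta_{\max,i}\leq\eta_{\max,j}$. The genuinely new ingredient relative to Lemma~\ref{lemma_order1}---which you anticipated as ``boundary bookkeeping'' but did not supply---is the case $\Asb_i<a_j$, where the full swap would violate $A_j\geq a_j$; there the paper uses the truncated swap $\Asc_j=a_j$, $\Asc_i=\Asb_i+(\Asb_j-a_j)$, and the gain estimate needs property (M-6'), which combines convexity of $\Mb_k$ (M-4') with a cross-tier increment comparison (M-5'). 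So strict monotonicity of $\Mb_k$ alone, which you identify as ``the only genuinely new ingredient,'' is not sufficient; the convexity-based increment inequalities are essential, and without them your argument does not close.
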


We observe that with the same ordering property,  (\ref{formula_solutioneta}) can again be adopted as an optimal solution to $\mathbf{P2A}$ in the dense UE scenario, leading to the following theorem:
\begin{theorem}\label{theorem2}
\textbf{(Priority Ordering)} Let $\bAsb$ be  optimal for Problem $\mathbf{P2A}$, then $(\betasb, \bAsb)$, where $\betasb$ is computed the same way as $\betas$ in (\ref{formula_solutioneta}),  is an optimal solution to $\mathbf{P2A}$.
\end{theorem}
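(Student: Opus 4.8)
The plan is to reduce Theorem~\ref{theorem2} to exactly the same linear-programming argument already used in the proof of Theorem~\ref{theorem1}. The observation is that once the optimal association vector $\bAsb$ is fixed, the quantities $\Mb_k(\Asb_k)$ become fixed scalars, so the problem of choosing $\bfeta$ to maximize $\mathbf{F'}(\bfeta,\bA)=\sum_{k=1}^K\eta_k\Mb_k(\Asb_k)$ subject to $\sum_k\eta_k=1$ and $\eta_{\min,k}\le\eta_k\le\eta_{\max,k}$ is a linear program in $\bfeta$ alone. Call this Problem $\mathbf{P2A}'$, the analogue of $\mathbf{P1A}$.

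First I would invoke Lemma~\ref{lemma_order2} to establish that the coefficients are sorted in nondecreasing order: $\Mb_1(\Asb_1)\le\Mb_2(\Asb_2)\le\cdots\le\Mb_K(\Asb_K)$. For a linear program of the form ``maximize $\sum_k c_k\eta_k$ over the box-with-sum-constraint polytope'' with $c_1\le\cdots\le c_K$, a greedy assignment is optimal: push as much weight as possible onto the coordinate with the largest coefficient, then the next largest, and so on, while always reserving at least $\eta_{\min,k}$ for every coordinate not yet filled. This is precisely the recursive rule~(\ref{formula_solutioneta}) that defines $\betas$ (and hence $\betasb$). Second, I would check feasibility of the resulting $\betasb$: it lies in the box because each $\etas_l$ is obtained as a minimum with $\eta_{\max,l}$ and, by the reservation of $\eta_{\min,k}$ masses for the remaining tiers together with $\sum_k\eta_{\min,k}\le 1\le\sum_k\eta_{\max,k}$, each $\etas_l$ is at least $\eta_{\min,l}$; and it sums to one by construction. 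Third, I would argue optimality of $\betasb$ for $\mathbf{P2A}'$ by the standard exchange argument: if some feasible $\bfeta$ put less mass on a higher-coefficient tier than $\betasb$ does, shifting mass from a lower- to a higher-coefficient tier cannot decrease the objective, and iterating drives any feasible point to $\betasb$. Since $\betasb$ does not depend on the exact values of $\bAsb$, only on the ordering furnished by Lemma~\ref{lemma_order2}, this shows $(\betasb,\bAsb)$ attains the optimum of $\mathbf{P2A}$: $\bAsb$ is optimal by hypothesis and $\betasb$ is the best response to it.

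The only genuinely new ingredient relative to Theorem~\ref{theorem1} is that Lemma~\ref{lemma_order2} must be used in place of Lemma~\ref{lemma_order1}, so the proof is essentially a one-line citation once that lemma is in hand; I would simply mirror the wording of the Theorem~\ref{theorem1} proof. The main (and only mild) obstacle is a bookkeeping one: verifying that the sequential minima in~(\ref{formula_solutioneta}) never undershoot the lower bounds $\eta_{\min,l}$, i.e. that after allocating $\etas_{l+1},\dots,\etas_K$ there is still at least $\eta_{\min,l}+\sum_{k=1}^{l-1}\eta_{\min,k}$ of the budget left. This follows because at each stage we subtract off either an $\eta_{\max}$ value (capped) or leave exactly the reserved minimum masses intact, combined with $\sum_{k=1}^{K}\eta_{\min,k}\le 1$; since this identical computation already underlies Theorem~\ref{theorem1}, I would not reproduce it in detail but merely note that it carries over verbatim because the feasible polytope for $\bfeta$ in $\mathbf{P2}$ is the same as in $\mathbf{P1}$.
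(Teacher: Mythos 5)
Your proposal is correct and follows exactly the route the paper intends: the paper's own proof of Theorem~\ref{theorem2} is simply ``similar to that of Theorem~\ref{theorem1},'' i.e.\ fix $\bAsb$, reduce to a linear program in $\bfeta$ with coefficients sorted by Lemma~\ref{lemma_order2} in place of Lemma~\ref{lemma_order1}, and conclude that the greedy allocation (\ref{formula_solutioneta}) is optimal and feasible over the same polytope. Your write-up just makes that substitution explicit, so no further changes are needed.
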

\begin{proof}
The proof is similar to that of Theorem \ref{theorem1}.
\end{proof}

Given an optimal $\betasb$ for $\mathbf{P2A}$, we find the corresponding optimal $\bAsb$ for $\mathbf{P2A}$ by solving the following Problem $\mathbf{P2B}$:
\begin{align}
\nonumber& \underset{\bA}{\text{maximize}}
& & \sum_{k=1}^{K}\etasb_k \Mb_k(A_k) \\
& \text{subject to}
         & & \sum_{k=1}^{K}A_k=1,\quad A_k\geq a_k,\forall k.
\end{align}
Unlike in the sparse UE scenario, here we have an explicit solution, as stated in the following theorem:
\begin{theorem}\label{theorem_order3}
Given an optimal $\betasb$ for $\mathbf{P2A}$ (computed the same way as $\betas$ in (\ref{formula_solutioneta})), the corresponding optimal $\bAsb$ can be expressed as follows:
\begin{align}\label{formula_solutionA}
\left\{
  \begin{array}{ll}
    \Asb_k=a_k, & k\geq 2 \\
     \Asb_{1}=1-\sum_{l=2}^{K}\Asb_l. \\
  \end{array}
\right.
\end{align}
\end{theorem}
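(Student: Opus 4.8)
The plan is to read Problem $\mathbf{P2B}$ as the maximization of a \emph{convex} function over a polytope, and then to identify which vertex is optimal. First I would recall from Appendix~\ref{appendix-Mbk} that $\Mb_k(A_k)=\lambda_k/\bigl(\mu(1+CA_k)\bigr)$ is convex and strictly decreasing in $A_k$; since $\betasb\geq\bzero$, the objective $\mathbf{F'}(\betasb,\bA)=\sum_{k=1}^{K}\etasb_k\Mb_k(A_k)$ is therefore convex in $\bA$. Shifting each coordinate by $-a_k$ turns the feasible set $\{\bA:\sum_{k}A_k=1,\ A_k\geq a_k\}$ into $\bigl(1-\sum_{j}a_j\bigr)$ times the standard simplex, which is nonempty since $\sum_{j}a_j<1$ in the dense UE scenario; hence its extreme points are the $K$ vectors $\mathbf{v}^{(i)}$ whose $i$-th entry is $1-\sum_{k\neq i}a_k$ and whose remaining entries equal the corresponding $a_k$. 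Because a convex function on a compact convex set attains its maximum at an extreme point, it suffices to prove that $\mathbf{v}^{(1)}$ --- which is exactly the vector in~(\ref{formula_solutionA}) --- yields the largest value of $\mathbf{F'}(\betasb,\cdot)$ among $\mathbf{v}^{(1)},\ldots,\mathbf{v}^{(K)}$.

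Next I would evaluate the objective at each $\mathbf{v}^{(i)}$. Writing $S=\sum_{k}a_k$ and using $\lambda_k=\mu C a_k^2$ (which follows from $a_k=\sqrt{\lambda_k/(\mu C)}$), a short computation gives
\begin{align}
\nonumber \mathbf{F'}(\betasb,\mathbf{v}^{(i)})=&\sum_{k=1}^{K}\etasb_k\Mb_k(a_k)+\etasb_i\bigl[\Mb_i(1-S+a_i)-\Mb_i(a_i)\bigr]\\
\nonumber =&G-\etasb_i\,\frac{C^2(1-S)\,a_i^2}{(1+Ca_i)\bigl(1+C(1-S+a_i)\bigr)},
\end{align}
where $G=\sum_{k}\etasb_k\Mb_k(a_k)$ does not depend on $i$. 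As $C^2(1-S)>0$, maximizing this over $i$ is equivalent to minimizing $\etasb_i\,\varphi(a_i)$ over $i$, where $\varphi(a)\triangleq\frac{a^2}{(1+Ca)(1+C(1-S+a))}$.

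Then I would finish by combining two monotonicities. First, $\varphi$ is strictly increasing on $(0,\infty)$: from $\varphi'/\varphi=\frac{2}{a}-\frac{C}{1+Ca}-\frac{C}{1+C(1-S+a)}$ and the bounds $\frac{C}{1+Ca}<\frac{1}{a}$, $\frac{C}{1+C(1-S+a)}<\frac{1}{a}$ one gets $\varphi'>0$, so $a_1<\cdots<a_K$ (which holds because $\lambda_1<\cdots<\lambda_K$) gives $\varphi(a_1)<\cdots<\varphi(a_K)$. Second, the greedy rule~(\ref{formula_solutioneta}) yields $\etasb_1\leq\etasb_2\leq\cdots\leq\etasb_K$: writing $R_l$ for the bracketed quantity in~(\ref{formula_solutioneta}), so that $\etasb_l=\min(R_l,\eta_{\max,l})$, one computes $R_l=R_{l+1}+\eta_{\min,l}-\etasb_{l+1}$; since $\etasb_{l+1}\geq\eta_{\min,l+1}\geq\eta_{\min,l}$ --- feasibility of $\betasb$ being argued exactly as for $\betas$ in the proof of Theorem~\ref{theorem1}, and $\eta_{\min,l}\leq\eta_{\min,l+1}$ by assumption --- we get $R_l\leq R_{l+1}$, hence $\etasb_l=\min(R_l,\eta_{\max,l})\leq\min(R_{l+1},\eta_{\max,l+1})=\etasb_{l+1}$ (also using $\eta_{\max,l}\leq\eta_{\max,l+1}$). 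All quantities being nonnegative, $\etasb_i\varphi(a_i)\geq\etasb_1\varphi(a_i)\geq\etasb_1\varphi(a_1)$ for every $i$, so the minimum is attained at $i=1$; therefore $\mathbf{v}^{(1)}$ maximizes $\mathbf{F'}(\betasb,\cdot)$ over the vertices, hence over the whole feasible set, which is precisely~(\ref{formula_solutionA}).

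I expect the main obstacle to be the chain $\etasb_1\leq\cdots\leq\etasb_K$: it has to be extracted carefully from the recursion~(\ref{formula_solutioneta}), and it is exactly where the structural assumptions $\eta_{\min,1}\leq\cdots\leq\eta_{\min,K}$ and $\eta_{\max,1}\leq\cdots\leq\eta_{\max,K}$ are indispensable; by comparison, the monotonicity of $\varphi$ and the ``maximum of a convex function lies at a vertex'' fact are routine. An alternative that avoids enumerating vertices is a direct exchange argument --- move the whole surplus $\sum_{k\geq2}(A_k-a_k)$ of an arbitrary feasible $\bA$ onto tier~$1$, which lands exactly on $\mathbf{v}^{(1)}$ --- but bounding the per-step change there still needs the convexity of $\Mb_k$ together with the same two monotonicities, so it is not clearly shorter.
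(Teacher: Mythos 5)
Your proof is correct, but it takes a genuinely different route from the paper's. The paper argues by a local exchange: assuming $\Asb_k>a_k$ for some $k\geq 2$, it transfers the surplus $D=\Asb_k-a_k$ to tier~$1$ and invokes property (M-6') of $\Mb_k$ (the same machinery as in the proof of Lemma~\ref{lemma_order2}) to show the objective strictly increases, a contradiction; the details are deferred to the technical report. You instead argue globally: since each $\Mb_k$ is convex and the weights $\etasb_k$ are nonnegative, $\mathbf{F'}(\betasb,\cdot)$ is convex on the polytope $\{\bA:\sum_kA_k=1,\ A_k\geq a_k\}$, so its maximum is attained at one of the $K$ vertices $\mathbf{v}^{(i)}$, and you then compare the vertex values in closed form, reducing the comparison to the monotonicity of $\varphi(a)=a^2/\bigl((1+Ca)(1+C(1-S+a))\bigr)$ together with $\etasb_1\leq\cdots\leq\etasb_K$. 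Your computations check out: $\Mb_k(A)=\lambda_k/(\mu(1+CA))$ is indeed convex decreasing, the vertex evaluation and the sign of $\varphi'$ are correct, and your derivation of the chain $\etasb_1\leq\cdots\leq\etasb_K$ from the recursion (\ref{formula_solutioneta}) is exactly the ingredient the paper's exchange argument also needs (the gain at tier $k$ is weighted by $\etasb_k$ and the loss at tier $1$ by $\etasb_1$) but leaves implicit in the technical report. What your approach buys is a self-contained proof with an explicit expression for the gap between candidate optima, at the cost of invoking the "max of a convex function over a polytope is at a vertex" fact; what the paper's buys is brevity by reusing the appendix properties (M-4')--(M-6') already established for Lemma~\ref{lemma_order2}. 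Your closing remark slightly undersells the alternative: the "direct exchange" you sketch is precisely what the paper does, and it is indeed no shorter once the $\etasb$ ordering is made explicit.
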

\begin{proof}
See Appendix \ref{appendix-theorem_order3}.
\end{proof}
Note that both  (\ref{formula_solutioneta}) and (\ref{formula_solutionA}) can be computed with low computational complexity.

\subsubsection{Bounding the Performance Gap}\label{subsubsection_bound}
Since $(\betasb, \bAsb)$ is optimal for $\mathbf{P2A}$ rather than $\mathbf{P2}$,  we next quantify the performance gap between $(\betasb, \bAsb)$ and an optimal solution $(\betas, \bAs)$ to $\mathbf{P2}$.

The performance gap is defined as
\begin{align}
E=\mathbf{F}(\betas,\bAs)-\mathbf{F}(\betasb,\bAsb).
\end{align}
Because $\mathbf{F}(\betasb,\bAsb)\leq \mathbf{F}(\betas,\bAs)\leq  \mathbf{F}'(\betas,\bAs)\leq\mathbf{F}'(\betasb,\bAsb)$, we have
\begin{align}
E\leq \mathbf{F}'(\betasb,\bAsb)-\mathbf{F}(\betasb,\bAsb)\triangleq E'.
\end{align}
Substituting $\betasb$ and $\bAsb$ into $E'$, we have
\begin{align}
E'=&\sum_{k=1}^{K}\frac{\etasb_k}{\frac{\Asb_k\mu}{\lambda_k}\left(\frac{\Asb_k\mu}{\lambda_k}+1\right)\left(\frac{1}{\Asb_k}+C\right)}.
\end{align}
Therefore, the relative performance gap is bounded by
\begin{align}
\epsilon\triangleq&\frac{E}{\mathbf{F}(\betas,\bAs)}\leq \frac{E'}{\mathbf{F}(\betasb,\bAsb)}\\
\label{formula_bound}=&\frac{\sum_{k=1}^{K}\frac{\etasb_k}{\frac{\Asb_k\mu}{\lambda_k}\left(\frac{\Asb_k\mu}{\lambda_k}+1\right)\left(\frac{1}{\Asb_k}+C\right)}}
{\sum_{k=1}^{K}\frac{\etasb_k}{\left(\frac{\Asb_k\mu}{\lambda_k}+1\right)\left(\frac{1}{\Asb_k}+C\right)}}\\
\stackrel{(a)}{\leq} &\max_k\frac{\lambda_k}{\Asb_k\mu}\leq \max_k \frac{\lambda_k}{a_k\mu}
\\ = &\sqrt{\lambda_KC/\mu},
\end{align}
where inequality (a) is obtained by observing the common factor in the summations in the numerator and denominator of (\ref{formula_bound}). This implies that $\epsilon$ scales as $O\left(\sqrt{\lambda_K/\mu}\right)$.
Note that because we are considering the dense UE scenario (i.e., $\sqrt{\frac{1}{C}}\left(\sum_{i=1}^{K}\sqrt{\lambda_i}\right)<\sqrt{\mu}$),  $O\left(\sqrt{\lambda_K/\mu}\right)$ is small by definition.

\subsection{Estimation for Complexity of Exhaustive Search}
In this subsection, we briefly discuss the complexity of the exhaustive search approach to solve Problem $\mathbf{P}$. First, as explained in details in our technical report \cite{OurReport},  we observe that at least  one of the optimal solutions to  $\mathbf{P}$, $(\betas, \bAs)$,  has the following property:  there is at most one $k\in\{1,2,\ldots K\}$ such that $\eta_{\min,k}<\etas_k<\eta_{\max,k}$; $\forall j\neq k$, either $\etas_j=\eta_{\min,j}$ or $\etas_j=\eta_{\max,j}$ (i.e., at the boundary). Thus, the search for $\betas$ needs to be performed only at these boundary cases, leading to a complexity of $\Omega(2^K)$.  Furthermore, in the dense UE scenario, a numerical search  over all locally optimal $\bA$ (at least $2^K$ of them) is required, leading to another fold of  $\Omega(2^K)$, i.e., $\Omega(4^K)$ overall, in complexity.
Numerical studies on the computational complexity will be presented in Section \ref{section_experiment}.

\section{Nash Equilibrium for  SSAUA}\label{section_equilibirum}
Individual UEs may behave selfishly to derive unfair advantage despite our design of $\bBs$ (or equivalently $\bAs$). Thus, in this section, we propose a Surcharge Pricing Scheme (SPS), such that the designed $\bBs$ is the natural outcome of a Nash equilibrium.  Note that the designed spectrum allocation factors $\betas$ can be maintained by the network operator and is beyond our concern.

We consider a reference individual UE, whose association bias values are $\bB'=(B_1', B_2',\ldots, B_K')$.  Let $\bA'=(A_1', A_2',\ldots, A_K')$ be its corresponding association probabilities. For the  other UEs, suppose they all obey the association bias values $\bBs$ assigned by the network operator.
Similar to the discussions in Section \ref{section_model} and \ref{section_performance}, 
 the average data rate of the reference UE  is
\begin{align}
\mathbb{F}=\sum_{k=1}^{K}\frac{\etas_kW\log(1+T)}{(\As_k\mu/\lambda_k+1)(\frac{1}{A_k'}+C)}.
\end{align}

If the reference UE performs an optimization on $\mathbb{F}$ with respect to $\bA'$,  the resultant optimal $\bAsd=(\Asd_1,\Asd_2,\ldots, \Asd_K)$ is  unlikely to be the same as $\bAs$.
Therefore, we add the following Surcharge Pricing Scheme: the network operator applies a surcharge $c_k$ to each UE associated with a tier-$k$ BS. Let $\bc=(c_1,c_2,\ldots, c_K)$. In this case, the average surcharge for the reference UE is $\sum_{k=1}^{K}c_kA_k'$.
Accordingly, the  reference UE  will perform the following optimization Problem $\mathbf{P3}$:
\begin{align}
\nonumber& \underset{\bA'}{\text{maximize}}
& & \mathbb{F}'=\sum_{k=1}^{K}\left(\frac{ \etas_kW\log(1+T)}{\left(\As_k\mu/\lambda_k+1\right)
\left(\frac{1}{A_k'}+C\right)}-c_k A_k'\right) \\
& \text{subject to}
& & \sum_{k=1}^{K}A_k'=1, \quad A_k'\geq 0.
\end{align}

Different from $\mathbf{P}$, it can be shown that $\mathbf{P3}$ is a standard convex optimization problem. By the KKT conditions, its optimal solution $\bAsd$ satisfies %
\begin{align}
\label{formula_KKT1}\frac{H_k}{(1+C\Asd_k)^2}-c_k-\nu+\theta_k=0,\\
\theta_k\Asd_k=0, \quad\theta_k\geq0,
\end{align}
where $H_k=\frac{ \etas_k W\log(1+T)}{\As_k\mu/\lambda_k+1}$, $\theta_k$ is a Lagrange multiplier corresponding to the inequality constraint $A_k'\geq 0$, and $\nu$ is a Lagrange multiplier corresponding to the equality constraint $\sum_{k=1}^{K}A_k'=1$.

Setting $\Asd_k=\As_k$, we have
\begin{align}
c_k=\begin{cases}
\infty, & \textrm{if } \As_k=0,\\
\frac{H_k}{(1+C\As_k)^2}-\nu, & \textrm{otherwise}.
\end{cases}
\end{align}
Note that $\nu$ could be set arbitrarily due to the equality constraint. Without loss of generality, we set $\nu=\min_k\frac{H_k}{(1+C\As_k)^2}$ so that the minimum surcharge among tiers is $0$.
As a consequence, a Nash Equilibrium is achieved where every UE adopts the assigned $\bBs$.

\section {Numerical Study} \label{section_experiment}

\begin{figure}[t]
\centering  \hspace{0pt}
\includegraphics[scale=0.41]{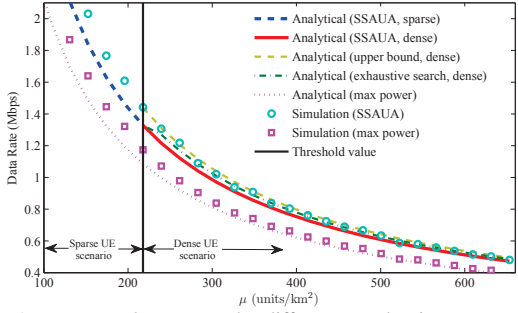}
\vspace{-0.4 cm}
\caption{Average UE data rate under different UE density $\mu$.}
\label{fig1}
\end{figure}

\begin{figure} \centering
\addtolength{\subfigcapskip}{-0.3cm}
\subfigure[Sparse UE scenario.]{
\includegraphics[scale=0.41]{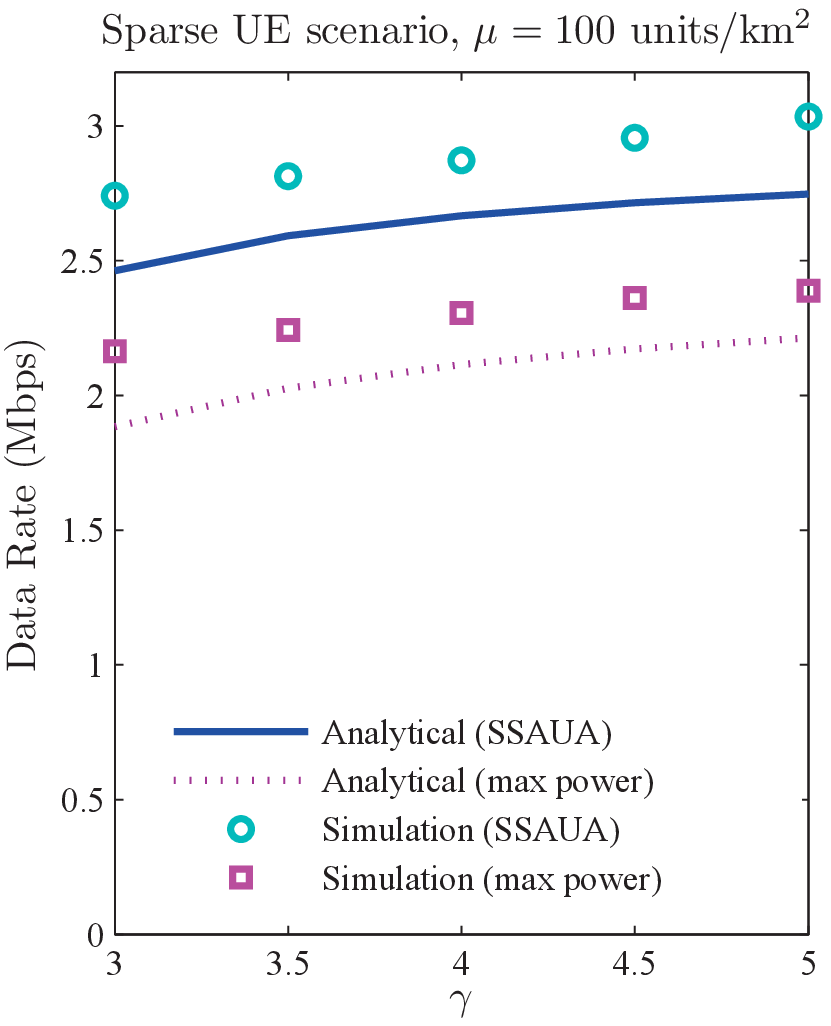}
}
\subfigure[Dense UE scenario.] {
\includegraphics[scale=0.41]{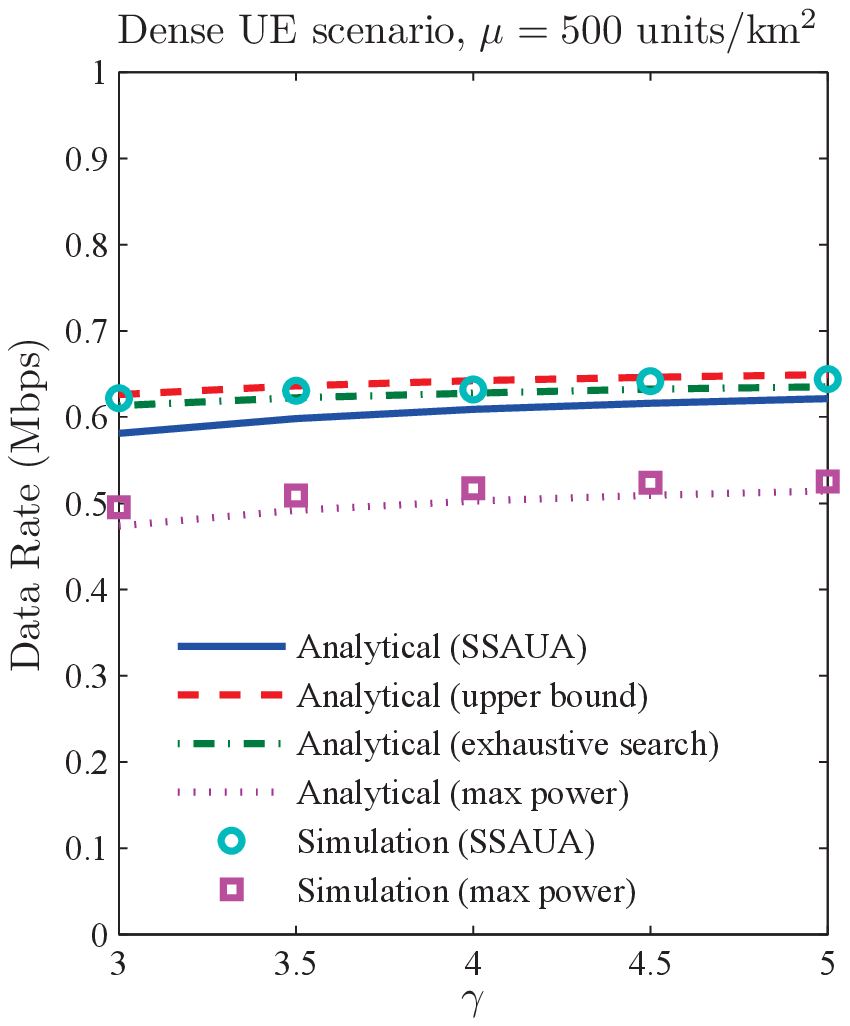}
}
\vspace{-0.3 cm}
\caption{Average UE data rate under different path loss exponent $\gamma$.}
\label{fig1_2}
\end{figure}

\begin{figure}[t]
\centering  \hspace{0pt}
\includegraphics[scale=0.41]{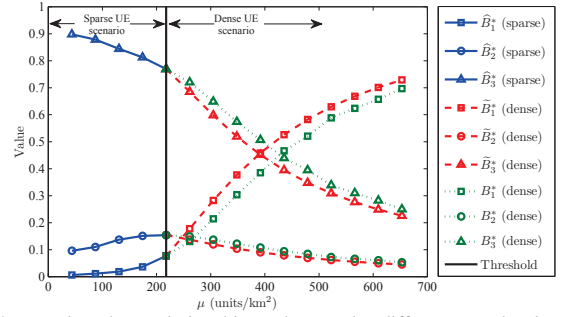}
\vspace{-0.4 cm}
\caption{Designed association bias values under different UE density $\mu$.}
\label{fig2}
\end{figure}

\begin{figure}[t]
\centering  \hspace{0pt}
\includegraphics[scale=0.41]{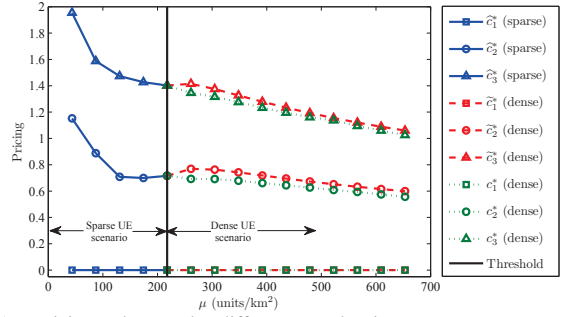}
\vspace{-0.4 cm}
\caption{Pricing values under different UE density $\mu$.}
\label{fig3}
\end{figure}

\begin{figure}[t]
\centering  \hspace{0pt}
\includegraphics[scale=0.41]{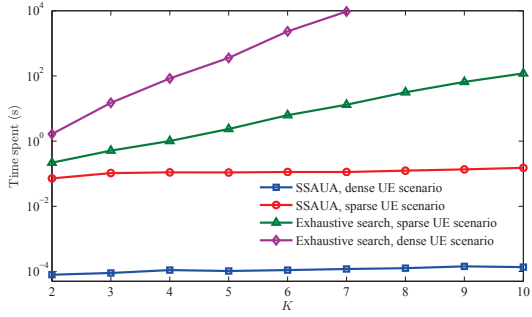}
\vspace{-0.4 cm}
\caption{Comparison of run time.}
\label{fig4}
\end{figure}

In this section,  we present numerical studies on the performance of SSAUA.
 We label the SSAUA solution as $(\betasc, \bBsc)$ and $(\betasb, \bBsb)$ in the sparse and dense UE scenarios, respectively.  Note that $(\betasc, \bBsc)$ is optimal in the sparse UE scenario. We also compare $(\betasb, \bBsb)$ with an optimal solution $(\betas, \bBs)$ obtained from exhaustive search in the dense UE scenario.


First, we study the network performance under different UE density $\mu$.
The network parameters are as follows: $K=3$, $\lambda_1=1$ units/km$^2$, $\lambda_2=5$ units/km$^2$, $\lambda_3=10$ units/km$^2$, $P_1=56$ dBm, $P_2=46$ dBm, $P_3=36$ dBm, $\eta_{\min,1}=0.2$, $\eta_{\min,2}=0.25$, $\eta_{\min,3}=0.3$,  $\eta_{\max,1}=0.35$, $\eta_{\max,2}=0.4$, $\eta_{\max,3}=0.45$, $\gamma=4$, $W=200$ MHz, and $T=0.2$. In each round of simulation,  UEs and BSs are generated on a $10$ km $\times$ $10$ km square, and the UEs in the central  $5$ km $\times$ $5$ km square are sampled  for performance evaluation (in order to remove the edge effect). Each simulation data point is averaged over all sampled UEs during 100 rounds of simulations.

 The results are shown in Fig.~\ref{fig1}. A vertical line indicates the threshold value of $\mu$, as given in Theorem \ref{theorem_branch}, separating the sparse and dense UE scenarios. For both scenarios, we show results of the analytical and simulated performance of SSAUA, and analytical and simulated performance of a ``max power'' approach, which employs equal spectrum allocation and user association based on the maximum received power. Since SSAUA  is not optimal in the dense UE scenario, we also add two sets of results accordingly:
the analytical optimal performance $\mathbf{F}(\betas,\bBs)$ through exhaustive search and its analytical upper bound  $\mathbf{F'}(\betasb,\bBsb)$.

Fig.~\ref{fig1} illustrates that the SSAUA achieves near-optimal solution in the dense UE scenario. Furthermore, SSAUA substantially outperforms the max power solution. Finally, the analytical performance is only slightly smaller than the simulated performance,  matching our discussions in Section \ref{section_performance}.

Fig.~\ref{fig1_2} shows the optimal network performance under different path loss exponent $\gamma$. The network parameters are the same as those used in Fig.~\ref{fig1} except $\mu$ is fixed at $100$ (i.e., sparse UE) and $500$ (i.e., dense UE) units/km$^2$ in Fig.~\ref{fig1_2} (a) and (b) respectively. This figure further confirms the observations from Fig.~\ref{fig1}.  Furthermore, it shows that SSAUA is effective for a wide range of path loss conditions.

Fig.~\ref{fig2} shows $\bBsc$, $\bBsb$, and $\bBs$; and Fig.~\ref{fig3} shows their corresponding prices $\bcsc$, $\bcsb$, and $\bcs$,  under different $\mu$. 
  We observe that $\bBsb$ and $\bcsb$ computed based on SSAUA approach are  close to their counterparts  $\bBs$ and $\bcs$.

 Finally, a run time experiment is conducted to compare   the computational complexity of SSAUA with that of exhaustive search. The experiment is executed by Matlab R2011a on an ASUS PC with Intel i7-3610QM 2.3GHz processor and 4GB RAM.
 The results are averaged  over 1000 runs for SSAUA and 10 runs for exhaustive search (both with randomly generated parameters).
 Fig.~\ref{fig4} shows that the run time of SSAUA is almost negligible compared with exhaustive search. Note that the $y$-axis is in log scale. When $K$ increases, the run time of exhaustive search exhibits an exponential growth tendency, while SSAUA remains computationally efficient.

\section{Conclusions}\label{section_conclusion}
In this work, we provide a theoretical  framework to study the joint optimization of spectrum allocation and user association  in heterogeneous cellular networks. We establish a stochastic geometric model that captures the random spatial patterns of BSs and UEs, and a closed-form expression of the analytical average UE data rate is derived.
We then consider the problem of maximizing the average UE data rate by jointly optimizing the spectrum allocation factors $\bfeta$ and user association bias values $\bB$, which is non-convex programming in nature. We propose the SSAUA approach to solve this problem with low computational complexity. We show that the SSAUA approach is optimal in the sparse UE scenario and near-optimal in the dense UE scenario, with a quantified tight bound scaling as $O(\sqrt{\lambda_K/\mu})$. We also propose the SPS such that the designed association bias values can be achieved in  Nash equilibrium.

\appendix
\subsection{Useful Properties of $M_k(A_k)$ }\label{appendix-Mk}
\begin{description}
  \item [(M-1) ] $M_k(A_k)$ is increasing on $[0,a_k]$ and decreasing on $[a_k,\infty)$.
  \item [(M-2) ]  $M_k(A_k)$ is concave on $[0,a'_k]$ and convex on $[a'_k,\infty)$, where $a'_k$ is some threshold value $a'_k>a_k$. $M_k(A_k)$ is concave on $[0,a_k]$.
  \item [(M-3) ] If $\lambda_i<\lambda_j$, then $M_i(A)<M_j(A), \forall A>0$.
  \item [(M-4) ] If $\lambda_i<\lambda_j$, then $M_j(A)-M_i(A)$ is strictly increasing on $[0,a_j]$. 
\end{description}

\subsection{Proof of Lemma \ref{lemma1}}\label{appendix-lemma1}
\begin{proof}
Suppose $\bAss$ is optimal, $\Ass_i<a_i$, and $\Ass_j>a_j$. Consider that we increase $\Ass_i$ by a small value $\Delta>0$ and decrease $\Ass_j$ by $\Delta$. According to property (M-1), $\etass_i M_i(\Ass_i)+\etass_j M_j(\Ass_j)<\etass_i M_i(\Ass_i+\Delta)+\etass_j M_j(\Ass_j-\Delta)$. Thus, through replacing $\Ass_i$ and $\Ass_j$ by $\Ass_i+\Delta$ and $\Ass_j-\Delta$ respectively, we find a better solution to $\mathbf{P}$, leading to a contradiction.
\end{proof}

\subsection{Proof of Lemma \ref{lemma_order1}}\label{appendix-lemma_order1}
\begin{proof}
Suppose $\exists i<j$ such that $M_i(\As_i)>M_j(\As_j)$. This implies that $\As_i>\As_j$. (Otherwise, suppose $\As_i\leq\As_j$, then we have $M_i(\As_i)\leq M_i(\As_j)<M_j(\As_j)$, leading to a contradiction.) A corresponding diagram is shown in Fig.~\ref{figure1}.

\textbf{Case 1:} $\etas_i\leq \etas_j$.

Let $\Asc_j=\As_i$ and $\Asc_i=\As_j$, then we have
\begin{align}
\nonumber&[\etas_iM_i(\Asc_i)+\etas_jM_j(\Asc_j)]-[\etas_iM_i(\As_i)+\etas_jM_j(\As_j)]\\
\nonumber=&[\etas_iM_i(\As_j)+\etas_jM_j(\As_i)]-[\etas_iM_i(\As_i)+\etas_jM_j(\As_j)]\\
\nonumber=&\etas_j[M_j(\As_i)-M_j(\As_j)]+\etas_i[M_i(\As_j)-M_i(\As_i)]\\
\label{formula_lemma11}\geq&\etas_i[M_j(\As_i)-M_j(\As_j)+M_i(\As_j)-M_i(\As_i)]>0,
\end{align}
where (\ref{formula_lemma11}) is due to property (M-4).

As a consequence, if $\As_i$ and $\As_j$ are replaced by $\Asc_i$ and $\Asc_j$ respectively, we obtain a larger $\mathbf{F}$, leading to a contradiction.

\textbf{Case 2:} $\etas_i> \etas_j$.

Let $\Asc_j=\As_i$, $\Asc_i=\As_j$, $\etasc_j=\etas_i$, and $\etasc_i=\etas_j$. (Note that because $\eta_{\min,i}\leq\eta_{\min,j}$ and  $\eta_{\max,i}\leq\eta_{\max,j}$,  $\etasc_j$ and $\etasc_i$ are guaranteed to be in the feasible region.)
\begin{align}
\nonumber&[\etasc_iM_i(\Asc_i)+\etasc_jM_j(\Asc_j)]-[\etas_iM_i(\As_i)+\etas_jM_j(\As_j)]\\
\nonumber=&[\etas_jM_i(\As_j)+\etas_iM_j(\As_i)]-[\etas_iM_i(\As_i)+\etas_jM_j(\As_j)]\\
\nonumber=&\etas_i[M_j(\As_i)-M_i(\As_i)]+\etas_j[M_i(\As_j)-M_j(\As_j)]\\
>&\etas_j[M_j(\As_i)-M_i(\As_i)+M_i(\As_j)- M_j(\As_j)]>0.
\end{align}
Thus, if $\As_i$, $\As_j$, $\etas_i$, and $\etas_j$ are replaced by $\Asc_i$, $\Asc_j$, $\etasc_i$, and $\etasc_j$ respectively, we can find a larger $\mathbf{F}$, leading to a contradiction.
\end{proof}

\begin{figure}[tbp]
\centering  \hspace{0pt}
\includegraphics[scale=0.5]{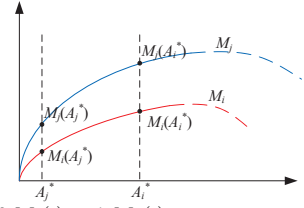}
\vspace{-0.4 cm}
\caption{Diagram of $M_i(\cdot)$ and $M_j(\cdot)$. }
\label{figure1}
\end{figure}

\subsection{Useful Properties of $\Mb_k(A_k)$ } \label{appendix-Mbk}
\begin{description}
  \item [(M-1')] $\Mb_k(A_k)$ is a decreasing convex function.
  \item [(M-2')] If $\lambda_i<\lambda_j$, $\Mb_i(a_i)<\Mb_j(a_j)$.
  \item [(M-3')] If $\lambda_i<\lambda_j$, then $\Mb_j(A)-\Mb_i(A)$ is a strictly decreasing function. 
  \item [(M-4')] $\Mb_k(A)-\Mb_k(A+D)>\Mb_k(A')-\Mb_k(A'+D)$, for any $A'>A\geq a_k$ and $D>0$.
  \item [(M-5')] If $\lambda_i<\lambda_j$, then $\Mb_j(a_j)-\Mb_j(a_j+D)>\Mb_i(a_i)-\Mb_i(a_i+D)$, for any $D>0$.
  \item [(M-6')] If $\lambda_i<\lambda_j$, then $\Mb_j(a_j)-\Mb_j(a_j+D)>\Mb_i(A')-\Mb_i(A'+D)$, for any $D>0$ and $A'>a_i$  (combining (M-4') and (M-5')).

\end{description}

\subsection{Proof of Lemma \ref{lemma_order2}} \label{appendix-lemma_order2}

\begin{proof}
Suppose that $\exists i<j$ (i.e., $\lambda_i<\lambda_j$) such that  $\Mb_i(\Asb_i)>\Mb_j(\Asb_j)$, which also implies that $a_i\leq\Asb_i<\Asb_j$. The corresponding diagrams are shown in Fig.~\ref{figure2}.

\textbf{Case 1: $\etasb_i\leq\etasb_j$.}

\textbf{Case 1.1:  $\Asb_i\geq a_j$.}

Let $\Asc_j=\Asb_i$ and $\Asc_i=\Asb_j$, then we have
\begin{align}
\nonumber&[\etasb_i\Mb_i(\Asc_i)+\etasb_j\Mb_j(\Asc_j)]-[\etasb_i\Mb_i(\Asb_i)+\etasb_j\Mb_j(\Asb_j)]\\
\nonumber=&[\etasb_i\Mb_i(\Asb_j)+\etasb_j\Mb_j(\Asb_i)]-[\etasb_i\Mb_i(\Asb_i)+\etasb_j\Mb_j(\Asb_j)]\\
\nonumber=&\etasb_j[\Mb_j(\Asb_i)-\Mb_j(\Asb_j)]+\etasb_i[\Mb_i(\Asb_j)-\Mb_i(\Asb_i)]\\
\label{formula_lemma21}\geq&\etasb_i[\Mb_j(\Asb_i)-\Mb_j(\Asb_j)+\Mb_i(\Asb_j)-\Mb_i(\Asb_i)]>0,
\end{align}
where (\ref{formula_lemma21}) is due to property (M-3').

Thus, if $\Asb_i$ and $\Asb_j$ are replaced by $\Asc_i$ and $\Asc_j$ respectively, we obtain a larger $\mathbf{F}'$, leading to a contradiction.

\textbf{Case 1.2:  $\Asb_i< a_j$.}

Let $\Asc_j=a_j$, $D=\Asb_j-a_j$ and $\Asc_i=\Asb_i+D$, then we have
\begin{align}
\nonumber&[\etasb_i\Mb_i(\Asc_i)+\etasb_j\Mb_j(\Asc_j)]-[\etasb_i\Mb_i(\Asb_i)+\etasb_j\Mb_j(\Asb_j)]\\
\nonumber=&[\etasb_i\Mb_i(\Asb_i+D)+\etasb_j\Mb_j(a_j)]-[\etasb_i\Mb_i(\Asb_i)+\etasb_j\Mb_j(\Asb_j)]\\
\nonumber=&\etasb_j[\Mb_j(a_j)-\Mb_j(\Asb_j)]+\etasb_i[\Mb_i(\Asb_i+D)-\Mb_i(\Asb_i)]\\
\label{formula_lemma22}\geq&\etasb_i[\Mb_j(a_j)-\Mb_j(\Asb_j)+\Mb_i(\Asb_i+D)-\Mb_i(\Asb_i)]>0,
\end{align}
where (\ref{formula_lemma22}) is due to property (M-6').

Thus, if $\Asb_i$ and $\Asb_j$ are replaced by $\Asc_i$ and $\Asc_j$ respectively, we obtain a larger $\mathbf{F}'$, leading to a contradiction.

\textbf{Case 2: $\etasb_i>\etasb_j$.}
\textbf{Case 2.1: $\Asb_i\geq a_j$.} \textbf{Case 2.2: $\Asb_i<a_j$.} The proof for these cases can be found in our technical report \cite{OurReport}.
\end{proof}


\begin{figure} \centering
\addtolength{\subfigcapskip}{-0.2cm}
\subfigure[Case 1.1 and 2.1.] { \label{figure2_1}
\includegraphics[scale=0.5]{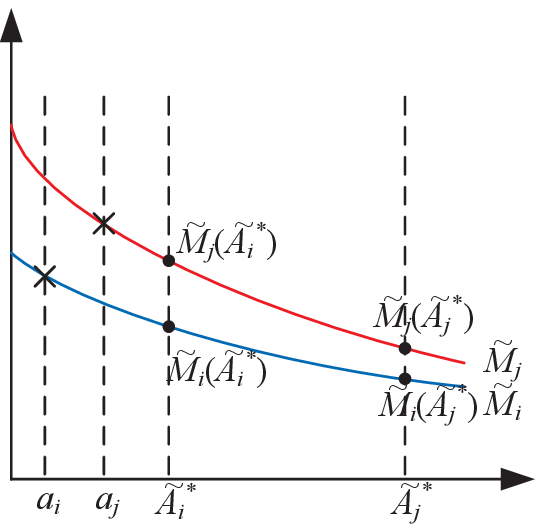}
}
\subfigure[Case 1.2 and 2.2.] { \label{figure2_2}
\includegraphics[scale=0.5]{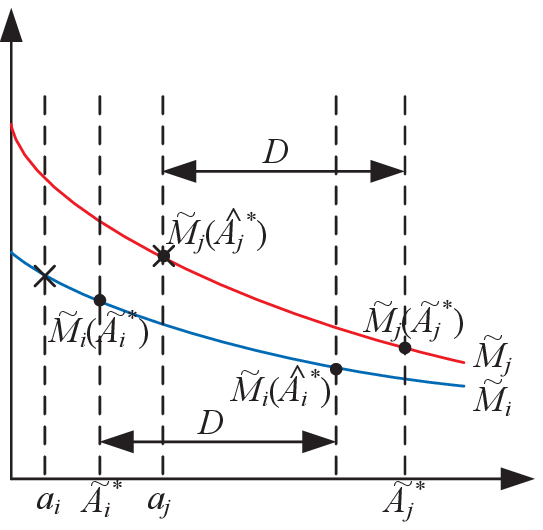}
}
\vspace{-0.3 cm}
\caption{Diagrams of $\Mb_i(\cdot)$ and $\Mb_j(\cdot)$. }
\label{figure2}
\end{figure}

\subsection{Proof of Theorem \ref{theorem_order3}} \label{appendix-theorem_order3}
\begin{proof}
Suppose $\exists k\geq 2$ such that $\Asb_k>a_k$. Let $l=1$, $\Asc_k=a_k$, $D=\Asb_k-a_k$, and $\Asc_l=\Asb_l+D$. 
Similar to the proof of Lemma \ref{lemma_order2}, we can show that if we replace $\Asb_l$ and $\Asb_k$ by $\Asc_l$ and $\Asc_k$ respectively, we find a better solution to Problem $\mathbf{P2A}$, which leads to a contradiction. See \cite{OurReport} for details.

\end{proof}

\bibliographystyle{IEEEtran}
\bibliography{baoweiSG}

\end{document}